\documentclass[runningheads,envcountsame]{llncs}
\bibliographystyle{splncs04}
\newif\ifanonymous
\anonymousfalse

\usepackage[T1]{fontenc}

\usepackage{amsmath, xspace,  mathtools, multirow, calc, cite, url, lipsum, hyperref, cleveref, censor}
\renewcommand\censor[1]{#1}

\usepackage{etoolbox,hyperref,thm-restate}
\makeatletter
\pretocmd{\thmt@rst@storecounters}{\Hy@SaveLastskip}{}{}
\apptocmd{\thmt@rst@storecounters}{\Hy@RestoreLastskip}{}{}
\makeatother

\def\set#1{\ensuremath{\left\{#1\right\}}}
\def\tup#1{\ensuremath{\left\langle#1\right\rangle}}
\def\ceiling#1{\ensuremath{\left\lceil #1\right\rceil}}
\def\floor#1{\ensuremath{\left\lfloor #1\right\rfloor}}
\def\problem#1{{\rm\textsf{#1}}}

\def\dlnmarnote#1{\marginpar{\scriptsize\sloppypar\raggedright\color{red!75!black} DLN: #1}}

\def\crmarnote#1{\marginpar{\scriptsize\sloppypar\raggedright\color{green!75!black} CR: #1}}

\ifanonymous
\fi

\newcommand{\ceq}{\coloneqq}
\def\OPT{\ensuremath{\text{\rm\texttt{OPT}\xspace}}}
\def\GREEDY{{\rm\texttt{GREEDY}}\xspace}

\let\autoref\cref

\usepackage{tikz}
\usetikzlibrary{matrix,fit,positioning,calc,math}
\pgfdeclarelayer{background}
\pgfdeclarelayer{foreground}
\pgfsetlayers{background,main,foreground}
\newenvironment{flag}{\color{red!75!black}}{\color{black}}

\usepackage{color}

\urlstyle{rm}

\begin{document}
\title{Maximizing the Margin between Desirable and Undesirable Elements in a Covering Problem}
\titlerunning{Margin between Desirable and Undesirable Elements in a Covering Problem}

\author{
\ifanonymous
  Author\inst{1}\orcidID{0000-0000-0000-0000} \and
  Author\inst{1,2}\orcidID{0000-0000-0000-0000} \and
  Author\inst{1}\orcidID{0000-0000-0000-0000} \and
  Author\inst{1,3}\orcidID{0000-0000-0000-0000} \and
  Author\inst{1}\orcidID{0000-0000-0000-0000} \and
  Author\inst{1}\orcidID{0000-0000-0000-0000}  
\else
  Sophie Boileau\inst{1}\orcidID{0009-0000-4896-5854} \and
  Andrew Hong\inst{1,2}\orcidID{0009-0005-2772-5399} \and
  David Liben-Nowell\inst{1}\orcidID{0000-0002-9763-4303} \and
  Alistair Pattison\inst{1,3}\orcidID{0009-0008-3946-9822} \and
  Anna N.~Rafferty\inst{1}\orcidID{0000-0002-8319-5370} \and
  Charlie Roslansky\inst{1}\orcidID{0000-0002-0765-0343}
\fi
}

\authorrunning{
\ifanonymous
  Author et al.
\else
  S.~Boileau et al.
\fi
}
\institute{
\ifanonymous
  Institution, Location \and
  Institution, Location \and
  Institution, Location \and
  \email{\{address,address,address\}@domain.dom\\
  \{address,address\}@domain.dom}
\else    
  Carleton College, Northfield, MN 55057, USA \and
  Stony Brook University, Stony Brook, NY 11794, USA \and
  Opportunity Insights, Harvard University, Cambridge, MA 02138, USA
  \email{\{boileau.sophiem,andrewhongcs,alistairpattison,roslanskyc\}@gmail.com\\
  \{dln,arafferty\}@carleton.edu}
\fi  
}

\maketitle
\begin{abstract}
  In many covering settings, it is natural to consider the presence both of elements that we seek to include and of elements that we seek to avoid.
  This paper introduces a novel combinatorial problem formalizing this tradeoff: from a collection of sets containing both ``desirable'' and ``undesirable'' items, pick the subcollection that maximizes the \emph{margin} between the number of desirable and undesirable elements covered.
  We call this the \emph{Target Approximation Problem} (TAP) and argue that many real-world scenarios are naturally modeled via this objective.
  We first show that TAP is hard, even when restricted to cases where the given sets are small or where elements appear in only a small number of sets.
  In a large swath of these cases, we show that TAP is hard even to approximate.
  We then exhibit exact polynomial-time algorithms for other restricted cases and provide an efficient 0.5-approximation for the case where elements occur at most twice, derived through a tight connection to the greedy algorithm for Unweighted Set Cover.
  \keywords{
    Target Approximation Problem \and desirable and undesirable elements \and partial covering \and approximation \and inapproximability}
\end{abstract}


\section{Introduction}
\label{sec:intro}

In a well-studied class of combinatorial problems, we face a collection $\mathcal{S}$ of subsets of some groundset, and we must select some subcollection $\mathcal{S}' \subseteq \mathcal{S}$.
Typically, the measure of benefit of $\mathcal{S}'$ is the size (or weight) of the union of $\mathcal{S}'$ --- i.e., we seek to cover all or most of the groundset --- and the cost of $\mathcal{S}'$ is the number (or weight) of the chosen sets.
For example, \problem{Set Cover} requires us to cover every groundset element while minimizing the number (or weight) of chosen sets~\cite{karp1972}.
\problem{Max-$k$-Cover} requires covering as many groundset elements as possible using only $k$ sets~\cite{36582,garey-and-johnson}. There is also the complementary cost-focused formulation, \problem{Min-$k$-Union}, in which we seek to \emph{avoid} groundset elements: we must cover as \emph{few} groundset elements as possible while choosing at least $k$ sets~\cite{chlamtac2018densest}.
More rarely, but intriguingly, there are applications that merge these maximization and minimization views: we are given a set of \emph{desirable}  elements and a separate, disjoint set of \emph{undesirable} elements.
The benefit of $\mathcal{S}' \subseteq \mathcal{S}$ comes from the former; the cost comes from the latter.
(Thus, the groundset contains both ``good'' and ``bad'' elements, and we seek to cover many good and few bad elements.)

Applications abound; regrettably, it is all too common for that which we seek to be bundled with that which we seek to avoid.
Given a social network with a set of ``target'' nodes and a set of ``nontarget'' nodes, choose a set of influencers/{\allowbreak}seed nodes to reach many targets and few nontargets~\cite{8621973}.
For example, a company may advertise a discount code through various influencers, seeking to inform many new potential customers (targets), while not self-undercutting its pricing for existing customers (nontargets).
Given a set of paths in a graph, choose a subset covering many distinct edges but few distinct nodes (relevant to network reliability~\cite{1498549}).
Given a collection of computer science papers, choose some authors who cover a large number of universities/institutions but a small number of individual conferences (relevant to group fairness~\cite{10.1145/3488560.3498525}).
Given a set of participants who might be hired to collect data by following any of a given set of routes through a collection of points of interest (POIs), choose some routes that cover many POIs but a small number of participants~\cite{8456511}.
Further applications have been identified in, e.g., record linkage in data mining~\cite{bilenko2006adaptive}, online review collation~\cite{nguyen2014review}, and motif identification in computational biology~\cite{10.1093/bioinformatics/btz697}.

Although there has been less attention to this combined view in the literature than to, say, the highly studied \problem{Set Cover} problem, this dual minimization/{\allowbreak}maximization perspective has appeared in certain combinatorial formulations.
In \problem{Red-Blue Set Cover}~\cite{carr2000rbsc,peleg2007}, we choose a subcollection of $\mathcal{S}$ that covers all desired (``blue'') elements while minimizing the number of undesired (``red'') elements that are covered.
This problem was later relaxed into \problem{Positive-Negative Partial Set Cover}~\cite{miettinen2008psc}: the hard requirement of covering all blue elements is dropped, and instead the objective function is generalized to minimize the number of errors in either direction --- i.e., the number of false negatives (uncovered blue elements) plus the number of false positives (covered red elements).

\paragraph*{A novel computational formulation: the Target Approximation Problem (TAP).}

In many of the applications in which this formulation is apt, though, the objective function of \problem{Positive-Negative Partial Set Cover} does not seem to capture the intuitive goal.
For example, take the viral marketing scenario cited above~\cite{8621973}: if an influencer-based discount campaign reaches $b$ potential new customers and $r$ current customers, then, up to constant multipliers, the company's profit from the campaign is determined by $b - r$.
That is, the \emph{margin} between the number of true positives and the number of false positives is what characterizes success, rather than the total number of ``errors.''
(Failing to reach a potential new customer may be a missed opportunity, but it's not a loss.)

In this paper, we formulate a new optimization problem, which we call the \emph{Target Approximation Problem (TAP),} which takes this margin-based view. 
Concretely, we are given a \emph{groundset} $U$ and a \emph{target} $B \subseteq U$ of desirable (``blue'') elements. The remaining groundset elements $R = U - B$ are undesirable (``red'').
We are also given a collection of sets $\mathcal{S} = \set{S_1, \ldots, S_m}$, with each $S_j \subseteq U$.
We seek to (approximately) represent $B$ as the union of some of the given subsets in $\mathcal{S}$, where
there is a benefit to every blue element covered and a cost to every red element.
Formally,
we must find a set $\mathcal{S}' \subseteq \mathcal{S}$ maximizing the \emph{margin} of $\mathcal{S}'$:
\begin{align*} 
    (\text{the number of blue elements in $\mathcal{S}'$}) -
    (\text{the number of red elements in $\mathcal{S}'$}).
\end{align*}
Note that we seek to maximize the difference between the number of true positives (blue elements covered by $\mathcal{S}'$) and the number of false positives (red elements covered by $\mathcal{S}'$);
again, 
there is no benefit to true negatives (uncovered red elements), nor any cost for false negatives (uncovered blue elements).

\paragraph*{The present work.}
We view the formulation of the novel Target Approximation Problem as perhaps our most important contribution.
This paper first formally introduces TAP, and then seeks to address its tractability.
Our first results are unsurprisingly negative: TAP is NP-hard in general (an immediate consequence of the hardness of special cases), and hard to $\Theta(1)$-approximate.
As a result, we focus in this paper on two natural special cases of TAP that may be tractable:
(i) \emph{restricted occurrence,} in which each groundset element occurs in $k$ or fewer of the given subsets; and
(ii) \emph{restricted weight,} in which each subset contains $w$ or fewer elements.
(Intuitively and, often, technically, these restrictions correspond to well-studied special cases of \problem{CNF-SAT} in which clauses only contain a small number of literals or where variables only occur in a limited numbers of clauses.)

\begin{description}
\item[\rm \em Results when either weight $\mathop{\le 2}$ or occurrence $\mathop{\le 2}$.]
  We establish an intriguing interplay between the occurrence and weight constraints: TAP is hard when \emph{either} quantity is nontrivial 
  --- but, perhaps surprisingly, TAP can be solved efficiently when \emph{both} are tightly constrained. 
  Specifically, there is an efficient, exact algorithm when $k = 1$ or $w = 1$,
  but TAP is NP-hard even with a $2$-occurrence or $2$-weight constraint if the other quantity is unrestricted.
  And yet when \emph{both} quantities are constrained, the problem becomes tractable again: an efficient algorithm for $2$-occurrence, $2$-weight TAP emerges from an efficient solution to \problem{Independent Set} on a collection of cycles and paths~\cite{arnborg89:bounded-treewidth-IS}.
\end{description}

\begin{figure}[t]
\newcommand{\drawregion}[4][gray]{ 
  \begin{pgfonlayer}{background}
    \draw[fill=#1, draw=gray!50!black, very thick] (#2,#3) rectangle (\xmax+1,\ymax+1);
  \end{pgfonlayer}
  \node[anchor=north west, below right=0.125pc of {(#2,#3)}, #1, text=black] {#4};
}
\def\xmax{12}
\def\ymax{4}

\centering\begin{tikzpicture}[yscale=-1,x=1.5pc,y=1.5pc]
      \draw[gray!50,step=1] (1,1) grid (\xmax+1,\ymax+1);
      \tikzmath{
        \xbutone = \xmax - 1;
        \ybutone = \ymax - 1;
      }
      \foreach \i in {1, ..., \xbutone} {
        \node[font=\small, rotate=90, at={(\i, 1)}, outer sep=0.25pc, anchor=north west] {$w = \i$} ;
      }
      \node[font=\small, rotate=90, at={(\xmax, 1)}, outer sep=0.25pc, anchor=north west] {$w \ge \xmax$} ;
      \foreach \i in {1, ..., \ybutone} {
        \node[font=\small, at={(1,\i+0.5)}, outer sep=0.25pc, anchor=east] {$k = \i$} ;
      }
      \node[font=\small, at={(1,\ymax+0.5)}, outer sep=0.25pc, anchor=east] {$k \ge \ymax$} ;
      \begin{pgfonlayer}{background}
        \draw[fill=green!25, draw=gray!50!black, very thick] (1,1) -- (1,\ymax+1) -- (2,\ymax+1)
                          -- (2,3) -- (3,3) -- (3,2)
                          -- (\xmax+1,2) -- (\xmax+1,1) -- (1,1);
        \node[anchor=north west,below right=0.5pc of {(1,1)}] {
          polynomial-time solvable
        };

        \drawregion[yellow!50]{ 2}{ 3}{}
        \node (yellow region) at (2.5, 4.25) {};
        \node[at={(-1.5,4.15)}, anchor=north east, align=flush left, text width=7pc, align=flush right, inner xsep=0, outer xsep=0.25pc] (hard but approx) {no exact algorithm; polynomial-time 0.5-approximable};
      \begin{pgfonlayer}{foreground}
        \draw[->, thick] (hard but approx.-15) -| (yellow region);
      \end{pgfonlayer}
      

        \drawregion[gray!25] { 4}{ 2}{}
        \drawregion[gray!25] { 5}{ 2}{$\frac{2011}{2012} \approx 0.99953$-inapprox.}
        \drawregion[gray!25] {12}{ 2}{}
        \drawregion[gray!25] { 3}{ 3}{$\frac{94}{95}     \approx 0.98947$-inapproximable}
        \drawregion[gray!25] { 4}{ 4}{$\frac{47}{48}     \approx 0.97917$-inapproximable}

        \node (brown region) at (12.5, 2.5) {};
        \node[below=0.75pc of {(13,5)}, anchor=north east, align=right, inner xsep=0, overlay] (brown inapprox) {$\frac{667}{668} \approx 0.99850$-inapproximable};
      \begin{pgfonlayer}{foreground}
        \draw[->, thick] (brown inapprox.north -| brown region) -- (brown region);
      \end{pgfonlayer}
      
        \coordinate (two four) at (4.5,2.5);

        \node[above left=0.515pc and 3pc of {(0.5,2)}, anchor=north east, align=right] (two four hard) {no exact algorithm};
       \begin{pgfonlayer}{foreground}
         \draw[<-, thick] (two four) -- +(-0.5pc,-0.5pc) -- (two four hard);
       \end{pgfonlayer}
     \node[font=\Large\tt, fill=white, opacity=0.75, inner sep=0] at (3.5,2.5) {?};
        
      \end{pgfonlayer}
  \end{tikzpicture}

  \caption{Summary of results for $k$-occurrence, $w$-weight TAP: each of the given subsets contains $\mathop{\le}w$ elements, and each element appears in $\mathop{\le}k$ of the given subsets.
  \label{fig:inapproximability-results-summary}
  \label{fig:summary-of-results}}

\end{figure}

\noindent%
Based on these results, we explore two further avenues: first, the boundary between tractability and intractability for small $k$ and $w$, and, second, questions of approximability and inapproximability
(see \Cref{fig:summary-of-results} for a summary):
\begin{description}

\item[\rm \em Results for $2$-weight TAP: hard for $k \ge 3$, but efficiently $0.5$-approximable.]
  The hardness result follows from known hardness for \problem{Vertex Cover} even for low-degree graphs~\cite{greenlaw95:cubic,alimonti97:_hardn_approx_probl_cubic_graph,gjs76}.
%
  Although an efficient exact solution for $2$-weight TAP is thus unlikely, we are able to give an efficient $0.5$-approximation algorithm. 
  This is our most technically involved result, based on careful analysis of the greedy algorithm for \problem{Unweighted Set Cover},
  Specifically, we leverage the results of Parekh~\cite{parekh1988setcover} and Slav{\'\i}k~\cite{slavik1996tight} that lower bound the number of elements covered by the $i$th iteration of the greedy algorithm for \problem{Set Cover}. 
\item[\rm \em Results for $2$-occurrence TAP: hard even when $w = 4$.] We adapt the hardness results for \problem{Vertex Cover} for low-degree graphs to show that $2$-occurrence TAP's hardness endures even with a weight-$4$ constraint.
  (Note that $2$-occurrence, $3$-weight TAP's tractability remains open; this is an interesting unresolved question that we leave for future work. See \Cref{sec:open-questions}.)
\item[\rm \em Inapproximability results for $5$-weight TAP and $3$-occurrence TAP.
  ]
  Finally, we derive concrete inapproximability results (ranging from 0.99953 to 0.97917) for the case in which $k \ge 3$ or $w \ge 5$, even with constraints on the other quantity. 
  Our proofs make use of the known hardness of \problem{$k$DM-$k$} ($k$-dimensional matching with at most $k$ occurrences of each element)~\cite{chlebik03:inapprox} and \problem{$a$-OCC-Max-$2$-SAT} (\problem{Max-$2$-SAT} where variables are limited to occurring in $a$ clauses)~\cite{bermankarpinski1999inapprox}.
\end{description}

\paragraph*{Other related work.}
\label{sec:related}%
To the best of our knowledge, the Target Approximation Problem is a novel formulation, but it has some aspects that echo classical combinatorial problems.
We previously described the connection to the (blue) maximization problem and the (red) minimization problem:
\problem{Max-$k$-Cover}, where we must maximize the number of covered blue elements given a ceiling $k$ on the number of sets~(e.g., \cite{36582,garey-and-johnson,10.1145/285055.285059,hochbaum1998analysis,khuller1999budgeted}), and the much-less-well-studied (and seemingly much harder)
\problem{Min-$k$-Union}, where we must minimize the number of covered red elements given a floor $k$ on the number of chosen sets~\cite{chlamtac2018densest,chlamtavc2017minimizing,10.1007/978-981-96-1090-7_3}.

Some variations have relaxed these problems with broadly similar motivations to ours: e.g., in \problem{$k$-Partial Set Cover}, we must cover at least $k$ groundset elements (though not necessarily all of them) with the smallest number of sets~\cite{gandhi2004approximation,BARYEHUDA2001137,SLAVIK1997251}; see also~\cite{DIMANT2025114891}.
As previously detailed, perhaps the closest matches are \problem{Red-Blue Set Cover}~\cite{carr2000rbsc,peleg2007} and \problem{Positive-Negative Partial Set Cover}~\cite{miettinen2008psc}, both of which incorporate the dual desired/{\allowbreak}undesired perspective.
There are known approximability and inapproximability results for both problems.
The difference in objective functions, though, means that these results yield few direct implications for TAP.

Although the problem itself is very different, in some ways TAP is most reminiscent of \problem{Correlation Clustering}~\cite{bansal2002correlation}, in which we are asked to cluster the nodes of a graph whose given $\pm 1$-weighted edges represent the reward/{\allowbreak}penalty for grouping the corresponding endpoints in the same cluster.
The intuitive similarity stems from the fact that \problem{Correlation Clustering} does not specify the number of clusters: an algorithm is free to choose to create a small number of large clusters (earning credit for many included $+1$ edges but paying cost for simultaneously including more $-1$ edges), or a larger number of smaller clusters (earning fewer $+1$s but also incurring fewer $-1$s).
The parallel in TAP is that an algorithm is free to choose how many, and which, blue elements for which to earn credit for covering, while paying the cost for covering whatever red elements are also incidentally covered.
Indeed, a modification of \problem{Correlation Clustering} called \problem{Overlapping Correlation Clustering} --- in which an individual node can be placed in multiple clusters --- is a generalization of \problem{Positive-Negative Partial Set Cover}~\cite{bonchi2013overlapping}.
The objective functions studied in correlation clustering (minimize errors, or maximize the number of correct answers) are analogous to \problem{Positive-Negative Partial Set Cover}, but a version of \problem{Correlation Clustering} where we maximize the total intracluster margin --- i.e., ignoring intercluster edges, regardless of whether they are labeled $+1$ or $-1$ --- is a closer, interesting relative of our problem.



\section{Preliminaries}
\label{sec:problem}

As described in \Cref{sec:intro},
we start with a fixed groundset $U$ of $n$ \emph{elements} 
and a specified \emph{target} $B \subseteq U$.
Elements in $B$ are called \emph{blue}; elements in $R = U - B$ are called \emph{red.}
%
We are also given a collection $\mathcal{S} = \set{S_1, \ldots, S_m}$ of $m$ \emph{subsets} of $U$.
We are asked to select $\mathcal{S}' \subseteq \mathcal{S}$, and the set of elements in $\mathcal{S}'$ --- that is, $\bigcup_{S_i \in \mathcal{S}'} S_i$ --- is then compared to $B$.
A blue element in $\mathcal{S}'$ is \emph{correct;} red elements in $\mathcal{S}'$ are \emph{incorrect.}
(Again, elements not in $\mathcal{S}'$ play no role in the objective; there is no benefit to uncovered red elements nor any cost for uncovered blue elements.)


\begin{definition}[Target Approximation Problem {[TAP]}]~
\begin{description}
\item[Input:] A groundset $U$ of $n$ elements, a target $B \subseteq U$, and a collection $\mathcal{S} = \set{S_1, \ldots, S_m}$ of subsets of $U$. 
\item[Output:] A subset $\mathcal{S}' \subseteq \mathcal{S}$ that maximizes the \emph{margin} of $\mathcal{S}'$:
  \begin{align}
    \label{def:margin}
    \textstyle
    \left|\set{i \in B : i \in \bigcup_{S_i \in \mathcal{S}'} S_i}\right|
    -
    \left|\set{i \in U - B : i \in \bigcup_{S_i \in \mathcal{S}'} S_i}\right|.
  \end{align}
\end{description}
To avoid trivialities, we assume that every element appears in at least one subset in $\mathcal{S}$; that $B$ and each $S_j$ are nonempty; and that the subsets in $\mathcal{S}$ are distinct.
\end{definition}
We refer to the quantity in \eqref{def:margin} as the \emph{margin} of the set $\mathcal{S}'$; thus in TAP we seek the subcollection of $\mathcal{S}$ with the largest margin.


By the \emph{weight} of $S_i \in \mathcal{S}$, we mean $|S_i|$.
A subset $S_i$'s \emph{blue-weight} (respectively, \emph{red-weight)} is the number of blue (respectively, red) elements it contains.
We consider TAP with the \emph{$w$-weight} constraint, in which each subset has weight at most $w$.
(Thus the $n$-weight constraint yields the fully general problem.)

By the number of \emph{occurrences} of a groundset element $i$, we mean the number of subsets in $\mathcal{S}$ in which $i$ appears.
We consider TAP with the \emph{$k$-occurrence} constraint, under which each element appears in at most $k$ different subsets.
(Thus the $m$-occurrence constraint yields the fully general problem.)


\section{TAP with Restricted Weight}
\label{sec:solving-w-weight}

We start with special cases of TAP in which each subset contains at most $w$ elements, establishing a sharp transition in hardness based on $w$.
We first show that TAP is efficiently solvable when each subset's weight is $1$ (or, slightly more generally, when there is no subset containing both blue and red elements).
We then establish TAP's hardness with the mildest relaxation of this constraint, even if subsets have a $2$-weight restriction (i.e., with just one red and one blue element). We begin with a useful fact:

\begin{remark}
  \label{claim:delete-pure-red+add-pure-blue}
    Let $\mathcal{S}' \subseteq \mathcal{S}$.
  Let $S^+$ be any subset with zero red-weight, and let $S^-$ be any subset with zero blue-weight.
  Then the margin of $\mathcal{S}' \cup \set{S^+}$ is never worse than the margin of $\mathcal{S}'$, and the margin of $\mathcal{S}' \cup \set{S^-}$ is never better. 
\end{remark}
The tractability of $1$-weight TAP follows from \Cref{claim:delete-pure-red+add-pure-blue} 
(see \Cref{sec:proofs:restricted-weight} for the proof; we simply take all and only those subsets with zero red-weight):
\begin{restatable}
  {theorem}{OneWeightTractable}
  The case of TAP in which every subset has either zero blue-weight or zero red-weight (which includes $1$-weight TAP) is solvable in polynomial time.%
  \label{lemma:1-weight-is-easy}%
  \label{lem:only-angels}%
\end{restatable}%
%
\noindent%
We also use \Cref{claim:delete-pure-red+add-pure-blue} to preprocess TAP instances to simplify their structure.
Specifically, we henceforth assume that there are no zero-red-weight subsets: we might as well take all of them, so we delete them along with deleting from the groundset the elements that they would cover.
Similarly, we assume that there are no zero-blue-weight subsets: we would never take these subsets, so we simply delete them from $\mathcal{S}$.
Thus, from here on, we assume that every subset contains both blue and red elements.

Next, we show that $1$-weight TAP is the limit of tractability: $2$-weight TAP is hard.
%
We say a TAP instance is \emph{one-red} if each given subset contains exactly one red element.
We start with a useful fact:

\begin{restatable}
  {lemma}{OneRedIsRedBlueSetCover}
  Consider a one-red instance. 
  Let $\mathcal{S}'$ be a collection of subsets that fails to cover at least one blue element.
  Then there exists a subset $S_i \notin \mathcal{S}'$ such that the margin of $\mathcal{S}' \cup \set{S_i}$ is no worse than the margin of $\mathcal{S}'$.

    Thus, for any one-red TAP instance, there is an optimal set of subsets that includes every blue element.  Furthermore, given any optimal solution, we can efficiently compute another optimal solution that includes every blue element.%
\label{cor:one-red-TAP-is-red-blue-set-cover}
\end{restatable}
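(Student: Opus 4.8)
The plan is to prove the augmentation statement (the existence of a margin-preserving exemplar $E$) by a one-step exchange argument, and then obtain both the ``thus'' and the ``furthermore'' parts by iterating that step. First I would pin down which exemplar to add. Since $S'$ omits some blue feature $b^*$, no exemplar of $S'$ contains $b^*$; but the standing TAP assumption that every feature appears in at least one exemplar guarantees that $b^*$ lies in \emph{some} exemplar, and any such exemplar $E$ therefore satisfies $E \notin S'$. This is exactly the candidate we want.

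Next I would bound the change in margin caused by replacing $S'$ with $S' \cup \set{E}$. Adding $E$ increases the margin by the number of blue features of $E$ that $S'$ does not already cover, and decreases it by the number of red features of $E$ that $S'$ does not already cover. The former is at least $1$, because it counts $b^*$; the latter is at most $1$, because the one-red hypothesis forces $E$ to contain exactly one red feature. Hence the net change is at least $1 - 1 = 0$, which proves the augmentation claim. The only point requiring care is that the single red feature of $E$ may already be covered by $S'$, in which case the margin strictly increases; this only helps, so the inequality is safe in all cases. (This refines \Cref{claim:delete-pure-red+add-pure-blue}, which handles the zero-red-weight case, to the one-red setting.)

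Finally, to obtain the existence of a blue-complete optimal solution together with its efficient construction, I would start from an arbitrary optimal $S^*$ and repeatedly apply the augmentation step: while some blue feature remains uncovered, add an exemplar $E$ as above. Each round leaves the margin no smaller — so, since $S^*$ began optimal, the running solution stays optimal — while each round strictly increases the number of covered blue features (it newly covers $b^*$). The process therefore terminates after at most $b = |B|$ rounds with every blue feature covered, and each round (locate an uncovered blue feature, then an exemplar containing it) is plainly polynomial, yielding the efficient procedure. I do not expect a substantive obstacle here; the one delicate balance is that the per-step accounting must be simultaneously weak enough to force termination (a strict gain in blue coverage) and strong enough to preserve optimality (no loss in margin), and the $1$-versus-$1$ bound above is precisely what delivers both.
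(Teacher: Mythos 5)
Your proposal is correct and matches the paper's own argument: both pick an exemplar containing an uncovered blue feature (which exists by the standing assumption that every feature appears somewhere), observe that adding it gains at least one blue feature while incurring at most one new red feature by the one-red hypothesis, and iterate to reach an optimal solution covering all blue features. The paper leaves the iteration implicit, whereas you spell out the termination and optimality-preservation bookkeeping, but there is no substantive difference.
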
%
\noindent%
(See \Cref{sec:proofs:restricted-weight}: adding any one-red subset that contains an uncovered blue element can only help the margin.)
Note that, after preprocessing, any $2$-weight TAP instance satisfies the one-red constraint; thus the conclusions of \Cref{cor:one-red-TAP-is-red-blue-set-cover} apply to $2$-weight TAP.
%
As a result, efficiently solving $2$-weight TAP implies an efficient solution to the problem of finding the set of subsets covering all blue elements while covering as few red elements as possible: that is, the \problem{Red-Blue Set Cover} problem~\cite{carr2000rbsc}. 
Carr et al.'s reduction from \problem{Set Cover} to \problem{Red-Blue Set Cover} can be applied essentially unchanged to show the hardness of $2$-weight TAP: 

\begin{theorem}
  \label{thm:angel-devil-is-hard}
  It is NP-hard to solve $w$-weight TAP for any $w \ge 2$.
\end{theorem}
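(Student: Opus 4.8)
The plan is to reduce from \problem{Set Cover}, following Carr et al.'s reduction to \problem{Red-Blue Set Cover} with a single modification that enforces the weight-$2$ constraint. First, observe that it suffices to prove the claim for $w = 2$: every $2$-weight instance is also a valid $w$-weight instance for any $w \ge 2$, so an efficient algorithm for $w$-weight TAP would in particular solve $2$-weight TAP. Moreover, after preprocessing (discarding zero-blue-weight and zero-red-weight exemplars via \Cref{claim:delete-pure-red+add-pure-blue}), every exemplar in a $2$-weight instance consists of exactly one blue and one red feature, so the instance is one-red and \Cref{cor:one-red-TAP-is-red-blue-set-cover} applies: some optimal solution covers every blue feature.

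Given a \problem{Set Cover} instance with universe $V = \set{v_1, \ldots, v_n}$ and sets $S_1, \ldots, S_p \subseteq V$ (where we assume $\bigcup_i S_i = V$), I would build the following $2$-weight TAP instance. Introduce one blue feature for each universe element (so $B = V$) and one red feature $\rho_i$ for each set $S_i$ (so $R = \set{\rho_1, \ldots, \rho_p}$). For each set $S_i$ and each element $v \in S_i$, create a weight-$2$ exemplar $\set{v, \rho_i}$. Because a multi-element set $S_i$ cannot be represented by a single weight-$2$ exemplar, the key departure from Carr et al.\ is to split $S_i$ into $|S_i|$ exemplars that all share the same red token $\rho_i$; this is legal under the weight constraint, which bounds exemplar size but not feature occurrences.

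For correctness, I would argue that the maximum margin equals $|V| - \OPT$, where $\OPT$ is the minimum number of sets covering $V$. By \Cref{cor:one-red-TAP-is-red-blue-set-cover} we may restrict attention to solutions $S'$ that cover all blue features, for which the margin is $|V| - |R(S')|$, where $R(S')$ denotes the set of red tokens covered. Crucially, $\rho_i$ is covered exactly when $S'$ uses \emph{some} exemplar drawn from $S_i$'s group, so covering $\rho_i$ is a one-time cost independent of how many of $S_i$'s elements we cover through it; hence $|R(S')|$ equals the number of distinct sets ``used,'' and the corresponding sets cover all of $V$. This gives $|R(S')| \ge \OPT$, so the margin is at most $|V| - \OPT$. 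Conversely, from any set cover of size $\OPT$ one can pick, for each $v \in V$, an exemplar $\set{v, \rho_i}$ with $v \in S_i$ and $S_i$ in the cover, yielding a solution of margin $|V| - \OPT$. Thus the optimal margin determines $\OPT$, and since \problem{Set Cover} is NP-hard, so is $2$-weight (hence $w$-weight) TAP.

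The main obstacle is not the construction itself but verifying the shared-red-token accounting: I must confirm that the cost of a solution is governed by the number of \emph{distinct} sets touched rather than the number of exemplars chosen, which is exactly what makes the margin mirror the \problem{Set Cover} objective. The appeal to \Cref{cor:one-red-TAP-is-red-blue-set-cover} handles the other delicate point --- that we need not worry about optimal solutions that deliberately leave some blue feature uncovered in order to avoid paying for an extra red token.
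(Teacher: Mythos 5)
Your proposal is correct and follows essentially the same route as the paper's own proof: the identical Carr-et-al.-style reduction from \problem{Set Cover} in which each set $S_i$ is split into $|S_i|$ weight-$2$ exemplars sharing a common red token, combined with \Cref{cor:one-red-TAP-is-red-blue-set-cover} to restrict attention to solutions covering all blue features so that the margin becomes $|V|$ minus the number of distinct sets used. Your explicit remark that the $w=2$ case suffices for all $w \ge 2$, and your two-directional accounting of the margin, are just slightly more spelled-out versions of what the paper does.
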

(For the argument, see \Cref{note:w=2:k=3}, in \Cref{sect:both-constrained}.
That result establishes hardness of TAP for $w \ge 2$ even with an additional constraint on element occurrence.)


\section{TAP with Restricted Occurrence}
\label{sec:solving-k-occurrence}

We next consider occurrence-constrained TAP: each element appears in at most $k$ subsets.
As with the weight constraint, we show a sharp transition in hardness based on $k$: TAP can be solved efficiently when elements occur only once, but TAP is hard if elements can occur even twice.
We again start with a useful fact, and then establish the tractability of $1$-occurrence TAP:
\begin{remark}
  \label{claim:disjoint-subsets-sum-margins}
  Let $\mathcal{S}_1, \mathcal{S}_2 \subseteq \mathcal{S}$ cover nonoverlapping sets of groundset elements.
  Then the margin of $\mathcal{S}_1 \cup \mathcal{S}_2$ is precisely the margin of $\mathcal{S}_1$ plus the margin of $\mathcal{S}_2$.
\end{remark}

\begin{restatable}{theorem}{OneOccurrenceTractable}
  $1$-occurrence TAP is solvable exactly in polynomial time.
  \label{lemma:1-occurrence-is-easy}
\end{restatable}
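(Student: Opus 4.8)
The plan is to exploit the observation that the $1$-occurrence constraint forces the exemplars to have pairwise disjoint feature sets, which makes the margin additive and decouples the selection decisions entirely.

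First I would note that, under the $1$-occurrence constraint, every feature $i \in U$ appears in at most one exemplar; hence any two distinct exemplars $E_j \ne E_{j'}$ have disjoint feature sets, since a shared feature would occur at least twice. Consequently, for any subcollection $S' = \set{E_{j_1}, \ldots, E_{j_t}}$, the constituent exemplars are pairwise feature-disjoint, so a straightforward induction on $t$ using \Cref{claim:disjoint-exemplars-sum-margins} shows that the margin of $S'$ equals $\sum_{\ell=1}^{t} \mu(E_{j_\ell})$, where $\mu(E)$ denotes the margin of the singleton collection $\set{E}$ (i.e., $E$'s blue-weight minus its red-weight).

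Because the margin decomposes as an independent sum over the chosen exemplars, maximizing it is immediate: I would place an exemplar $E$ into $S'$ if and only if $\mu(E) > 0$, breaking ties on $\mu(E) = 0$ arbitrarily since such exemplars contribute nothing. Any exemplar with $\mu(E) < 0$ strictly decreases the total and any with $\mu(E) > 0$ strictly increases it, with no interaction between distinct exemplars, so this selection is optimal. Each $\mu(E)$ is computed by a single pass over $E$'s features, and there are only $m$ exemplars, so the entire procedure runs in polynomial time.

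There is no real obstacle here beyond the initial structural observation: once disjointness is established, both the additivity of the margin and the correctness of the greedy per-exemplar selection follow at once. The only point requiring (minor) care is extending \Cref{claim:disjoint-exemplars-sum-margins}, which is stated for two collections, to an arbitrary number of pairwise-disjoint exemplars via induction, which is entirely routine.
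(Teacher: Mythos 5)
Your proposal is correct and follows essentially the same route as the paper: both use \Cref{claim:disjoint-exemplars-sum-margins} to observe that under the $1$-occurrence constraint all exemplars are pairwise feature-disjoint, so the margin is additive and one simply selects every exemplar whose blue-weight exceeds its red-weight. Your explicit note about extending the two-collection claim to arbitrarily many pairwise-disjoint exemplars by induction is a small but reasonable point of care that the paper leaves implicit.
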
%
\noindent%
(See \Cref{sec:proofs:restricted-occurrence} for the proof, which uses \Cref{claim:disjoint-subsets-sum-margins}.
The algorithm simply takes all and only those subsets that have higher blue weight than red weight.)

As with weight restrictions, though, the tractability of $1$-occurrence TAP melts away with the mildest relaxation:  

\begin{restatable}{theorem}{TwoOccurrenceHard}
  \label{thm:2-occurrence-is-hard}
  It is NP-hard to solve $k$-occurrence TAP for any $k \ge 2$.
\end{restatable}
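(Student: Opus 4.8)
The plan is to reduce from \problem{Max $2$-SAT}, which is NP-hard~\cite{gjs76}, to the $2$-occurrence case; this suffices, since any $2$-occurrence instance is also a valid $k$-occurrence instance for every $k \ge 2$, so hardness transfers upward. (More generally, the same construction starting from \problem{Max $k$-SAT} would place each clause feature in up to $k$ exemplars, directly yielding a $k$-occurrence instance, since clause width maps exactly to feature occurrence.) Given a $2$-CNF formula $\varphi$ with variables $x_1,\ldots,x_n$ and clauses $C_1,\ldots,C_m$, I would build a TAP instance in which selecting an exemplar corresponds to setting a literal \texttt{true}: for each variable $x_i$ create two exemplars $T_i$ and $F_i$, and for each clause $C_j$ create a single blue ``reward'' feature $c_j$, placed in exactly the (at most two) exemplars whose literal occurs in $C_j$, so that $c_j$ is covered iff the selection satisfies $C_j$. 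Because each clause has at most two literals, each $c_j$ occurs in at most two exemplars.

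The difficulty is enforcing \emph{consistency}: coverage has OR-semantics, so no single feature can penalize selecting \emph{both} $T_i$ and $F_i$ (a feature in both is covered the moment either is chosen), and a naive encoding would let an algorithm set every variable to ``both'' and satisfy every clause. I would therefore enforce consistency through cost tuning rather than a hard constraint. Let $D$ be the maximum number of clauses containing any single literal, and set $q = D+1$. I give each of $T_i$ and $F_i$ its own block of $q$ \emph{private} red features (occurring in that one exemplar only), and I give the pair a shared block of $W = q+1$ blue features placed in both $T_i$ and $F_i$. The shared blue block rewards assigning $x_i$ at all, while the private reds make a \emph{second} selection from the same pair cost strictly more than any clause reward it could repay. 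Every feature now occurs at most twice (the shared blue features occur exactly twice; all others at most twice), and the construction is plainly polynomial and produces only ``purple'' exemplars, consistent with the earlier preprocessing.

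The main work is a structural lemma: some optimal solution selects \emph{exactly one} of $\set{T_i, F_i}$ for every $i$. I would prove this by two exchange arguments. If both exemplars of a pair are selected, deleting one saves its $q$ private reds while the shared blue block stays covered and at most $D < q$ clause features become uncovered, for a net gain of at least $q - D = 1 > 0$; if neither is selected, adding either exemplar gains the full $W$-block and pays $q$ reds, for a net change of at least $W - q = 1 > 0$. Hence an optimal selection is exactly a truth assignment, and under any such selection the margin equals $n(W-q)$ plus the number of satisfied clauses, i.e.\ $n$ plus the number of satisfied clauses. Thus $\varphi$ has an assignment satisfying at least $t$ clauses iff the TAP instance admits a solution of margin at least $n + t$, establishing NP-hardness of $2$-occurrence (and hence $k$-occurrence) TAP. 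The one step to watch is the deletion argument's bound of $D$ on clauses lost, which is exactly where the choice $q = D+1$ is used; the rest is routine bookkeeping.
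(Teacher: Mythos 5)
Your proposal is correct and follows essentially the same route as the paper's proof: a reduction from \problem{Max $k$-SAT} with two exemplars per variable, blue clause features, private red penalty features, and shared blue reward features sized so that exchange arguments force an optimal solution to encode a truth assignment with margin equal to $n$ plus the number of satisfied clauses. The only cosmetic differences are that you tune the penalty/reward blocks to the maximum literal occurrence $D$ rather than to $m$, and you reduce from \problem{Max $2$-SAT} and let hardness transfer upward, whereas the paper parameterizes the whole construction by $k$ directly.
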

\begin{proof}[sketch; see \Cref{sec:proofs:restricted-occurrence}]
  Our proof relies on the hardness of \problem{Max $k$-SAT} (hard even for $k=2$~\cite{gjs76}): given a Boolean proposition $\varphi$ in $k$-CNF, find the maximum number of clauses that can simultaneously be satisfied.
%
  Consider a Boolean formula $\varphi$ in $k$-CNF, with $n$ variables $\set{x_1, \ldots, x_n}$ and $m$ clauses $\set{c_1, \ldots, c_m}$. 
  Construct a TAP instance with two subsets $X_{i,T}$ and $X_{i,F}$ corresponding to each variable~$x_i$. 
  There are three categories of elements:
    \begin{itemize}
    \item \emph{$m$ clause elements.}  A blue element corresponding to clause $c$ in $\varphi$ appears in the $k$ subsets that satisfy that clause ($X_{i,T}$ if $x_i$ is in $c$, and $X_{i,F}$ if $\overline{x_i}$ is).
      
    \item \emph{$2n(m+1)$ penalty elements.} Each subset contains $m+1$ unique red elements (found exclusively in that subset).

    \item \emph{$n(m+2)$ reward elements.}  For each variable $x_i$ in $\varphi$, there are $m+2$ blue elements found in both $X_{i,T}$ and $X_{i,F}$ (and in no other subset).
    \end{itemize}
    Note that the TAP instance obeys the $k$-occurrence constraint.

    Together, the penalty and reward elements are designed to ensure that an optimal TAP solution must correspond to a truth assignment --- including one, but only one, of $X_{i,T}$ and $X_{i,F}$.
    Furthermore, we argue that, for any set $\mathcal{S'}$ of subsets that corresponds to a truth assignment $\rho$, the margin of $\mathcal{S'}$ is exactly
    \begin{equation*}
      \label{eq:margin-in-2-occurrence-is-hard}
      (\text{the number of clauses in $\varphi$ satisfied by $\rho$}) + n.
    \end{equation*}
      Thus the optimal set of subsets corresponds to the truth assignment maximizing the number of clauses satisfied in $\varphi$, and furthermore the truth assignment is efficiently computable from the optimal set of subsets.
    \qed
\end{proof}
Later, we will tighten the construction in the proof of \Cref{thm:2-occurrence-is-hard} to yield stronger hardness results, including restricted-weight $2$-occurrence cases of TAP,
and to derive inapproximability results for TAP.


\section{Restricted Weight and Restricted Occurrence}
\label{sect:both-constrained}

We have now shown the hardness of 2-weight TAP and 2-occurrence TAP.  But TAP under the combination of these constraints \emph{can} be solved efficiently:

\begin{theorem}
  \label{thm:angel-devil-2-occurrence-is-easy}
  2-occurrence, 2-weight TAP is solvable exactly in polynomial time.%
\end{theorem}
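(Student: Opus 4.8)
The plan is to recast a $2$-occurrence, $2$-weight instance as a maximum-weight packing problem on a disjoint union of paths and cycles, and then to read off the optimum via Independent Set. First I would preprocess as in \Cref{sec:solving-w-weight}: by \Cref{claim:delete-pure-red+add-pure-blue} we may assume every exemplar is purple, and under the $2$-weight constraint this means each exemplar consists of exactly one blue feature and one red feature. I would then build the \emph{feature graph} $G$ whose vertices are the features and whose edges are the exemplars, each joining its (unique) blue feature to its (unique) red feature. Because $G$ is bipartite between blue and red and the $2$-occurrence constraint caps every vertex degree at $2$, the graph $G$ is a disjoint union of simple paths and (even) cycles, along each of which the colors alternate blue, red, blue, red, $\ldots$. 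By \Cref{claim:disjoint-exemplars-sum-margins}, the margin is additive over the connected components of $G$, so it suffices to solve each path or cycle separately and sum the optima.

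Next I would characterize the optimum on a single component. Selecting a set of exemplars is the same as selecting a set of edges, and a feature is covered exactly when it is a non-isolated vertex of the chosen edge-set; the margin is then the number of covered blue vertices minus the number of covered red vertices. The chosen edges decompose into connected pieces, each a sub-path of the component, and since the colors alternate, a piece's contribution to the margin telescopes to $+1$ if both of its endpoints are blue, to $0$ if its endpoints have different colors, and to $-1$ if both are red (and a fully selected even cycle contributes $0$). Consequently the maximum margin of a component equals the largest number of \emph{vertex-disjoint} sub-paths that each begin and end blue; and since the empty selection already attains margin $0$, we never lose by leaving the remaining features uncovered. Minimal such sub-paths have the form $b\,r\,b$ --- a red feature together with its two (distinct) blue neighbors --- so I would show that this quantity equals the maximum number of such \emph{anchors} we can choose subject to no two chosen anchors sharing a blue neighbor.

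This last quantity is an Independent Set. I would form the \emph{conflict graph} $H$ of the component, with one vertex for each red feature that has two distinct blue neighbors and an edge between two such reds exactly when they share a blue neighbor. Tracing the alternation shows that two reds conflict precisely when they are consecutive along the component, so $H$ is itself a path when the component is a path and a cycle when the component is a cycle (degenerating to a single edge for the four-cycle). The maximum margin of the component therefore equals the maximum independent set of $H$, which --- as $H$ is a disjoint union of paths and cycles, hence of treewidth at most two --- can be computed in polynomial time by the algorithm of Arnborg et al.~\cite{arnborg89:bounded-treewidth-IS}. Summing over components gives the optimal margin, and tracking which anchors are chosen reconstructs an optimal set of exemplars.

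I expect the main obstacle to be making the characterization in the second step airtight, rather than the algorithmics of the third. In particular, I would need to argue carefully that restricting attention to $+1$ pieces (equivalently, to anchors) is without loss of generality --- that no combination of $0$- or $(-1)$-valued pieces can enable a strictly better packing --- and to convert an arbitrary family of vertex-disjoint blue-endpoint sub-paths into an equinumerous family of pairwise non-conflicting anchors. A few degenerate cases also require attention: endpoint red features with a single blue neighbor (which can never anchor a $+1$), the four-cycle (where two reds share two blues, so $H$ collapses to one edge), and parallel exemplars sharing both endpoints (a multi-edge, whose red has only one distinct blue neighbor and so anchors nothing).
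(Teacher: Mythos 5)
Your proposal is correct, and it reaches the same polynomial-time conclusion by a genuinely different (in fact, dual) route. The paper works on the graph whose \emph{nodes are red features} and whose \emph{edges are blue features} (a blue feature occurring in two exemplars joins the two corresponding reds), invokes \Cref{cor:one-red-TAP-is-red-blue-set-cover} to argue that some optimal solution covers every blue feature, and thereby reduces the problem to \emph{minimum vertex cover} on a disjoint union of paths and cycles. You instead work on the bipartite feature graph (features as vertices, exemplars as edges), prove a self-contained telescoping identity --- each connected piece of the chosen edge set contributes $+1$, $0$, or $-1$ according to the colors of its endpoints --- and reduce to \emph{maximum independent set} on a conflict graph of ``anchors.'' Your conflict graph $H$ is essentially the same path-or-cycle graph the paper builds (two reds are adjacent in $H$ exactly when a blue feature links them), and on a path or cycle minimum vertex cover and maximum independent set are complements, so the two algorithms are two faces of one computation. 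What your route buys is independence from \Cref{cor:one-red-TAP-is-red-blue-set-cover}: the margin-counting argument directly handles solutions that leave some blue features uncovered, rather than first normalizing to an all-blues-covered optimum. What it costs is the extra packing argument you yourself flag --- showing that $0$- and $(-1)$-pieces never help and that disjoint blue-endpoint sub-paths can be shrunk to disjoint anchors --- which is sound as you sketch it (each blue-endpoint sub-path contains an anchor, and bipartiteness prevents two disjoint blue-endpoint pieces from merging), but is more work than the paper's one-line appeal to its lemma. Your attention to the degenerate cases (red path-endpoints, the four-cycle, parallel exemplars) is appropriate and all three resolve as you describe.
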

\begin{proof}
  Consider a $2$-occurrence, $2$-weight TAP instance. 
  By \Cref{lem:only-angels} (and the preprocessing described immediately after), we assume that each subset has exactly one blue and one red element.
  We can solve this instance by turning it into a graph in which each node corresponds to a red element and each edge to a blue element.
  Specifically, $\set{r,b} \in \mathcal{S}$ corresponds to a stub (a half-edge) connecting node $r$ with half of the edge $b$.
  Thus an edge and its endpoints correspond to a pair of subsets with a shared blue element between two distinct red elements.
  (We may assume that there is no unmatched half-edge: if some blue element $b$ occurs only once, then, by \Cref{cor:one-red-TAP-is-red-blue-set-cover}, there exists an optimal solution that includes $b$'s sole subset. Thus 
  we could augment our preprocessing to delete such subsets and both their red and blue elements.)
  
  By the 2-occurrence constraint, each node in this graph has degree at most two; as a result, the graph consists only of connected components that are cycles and paths.
  By \Cref{cor:one-red-TAP-is-red-blue-set-cover}, it suffices to choose a collection of subsets that covers all blue elements while minimizing the number of red elements.
  It follows that an optimal set of subsets corresponds directly to a minimum vertex cover of the graph.
  Because our graph consists of a collection of disjoint cycles and paths, finding a minimum vertex cover is easy: take an alternating set of nodes (i.e., skipping every other node), starting from an arbitrary node in a cycle, or from the penultimate node of a path.
  (See, e.g., \cite{arnborg89:bounded-treewidth-IS}.)
  The resulting set of nodes and their incident edges yields an optimal set of subsets.
  \qed
\end{proof}

Combined with our previous results, then, we know $k$-occurrence, 2-weight TAP is hard for general $k$ but efficiently solvable for $k = 2$, and $2$-occurrence, $w$-weight TAP is hard for general $w$ but efficiently solvable for $w = 2$.
Where are the transition points?
(I.e., what is the smallest $k$ and smallest $w$ for which $k$-occurrence, $2$-weight TAP and $2$-occurrence, $w$-weight TAP are hard?)
We resolve the $2$-weight question: even with a $3$-occurrence restriction, $2$-weight TAP is hard. 
We have not been able to fully answer the $2$-occurrence question, but we establish that $2$-occurrence TAP is hard even with a $4$-weight restriction. 

\begin{theorem}
  \label{note:w=2:k=3}
  $3$-occurrence, $2$-weight TAP is NP-hard.
\end{theorem}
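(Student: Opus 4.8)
The plan is to reduce from \problem{Vertex Cover} on graphs of maximum degree $3$, which is NP-hard~\cite{greenlaw95:cubic,alimonti97:_hardn_approx_probl_cubic_graph,gjs76}. The key observation is that the graph encoding used in the easy case (\Cref{thm:angel-devil-2-occurrence-is-easy}) applies verbatim to an arbitrary bounded-degree graph; what made the $2$-occurrence case tractable was \emph{only} that the resulting graph decomposes into cycles and paths, and relaxing to $3$-occurrence lets the graph be an arbitrary degree-$3$ graph, on which minimum vertex cover is already hard.

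Concretely, given a graph $G = (V, E)$ of maximum degree $3$, I would build the following $2$-weight TAP instance: create one red feature $r_v$ for each vertex $v \in V$ and one blue feature $b_e$ for each edge $e \in E$, and, for each edge $e = \set{u, v}$, create the two exemplars $\set{r_u, b_e}$ and $\set{r_v, b_e}$. Each exemplar contains exactly one red and one blue feature, so the instance is $2$-weight. For the occurrence bound, I would note that each red feature $r_v$ occurs in exactly $\deg(v) \le 3$ exemplars, while each blue feature $b_e$ occurs in exactly the two exemplars associated with its endpoints; hence every feature occurs at most $3$ times and the instance is $3$-occurrence, as required. The construction is clearly polynomial.

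For correctness, the plan is to invoke \Cref{cor:one-red-TAP-is-red-blue-set-cover}: the instance is one-red (indeed $2$-weight), so there is an optimal solution $S'$ that covers every blue feature, efficiently computable from any optimum. Covering blue feature $b_e$ means selecting at least one of $\set{r_u, b_e}$, $\set{r_v, b_e}$, so the set $C$ of red features appearing in $S'$ is precisely a set of vertices touching every edge of $G$ --- that is, a vertex cover. The margin of such an $S'$ equals $|E| - |C|$, so maximizing the margin subject to covering all blue features is exactly minimizing $|C|$ over vertex covers $C$ of $G$. Conversely, any vertex cover $C$ yields a blue-covering exemplar set using only red features in $C$ (pick, for each edge, one endpoint in $C$), so the optimal margin is $|E|$ minus the size of a minimum vertex cover, and the red features of an optimal $S'$ form a minimum vertex cover, recoverable in polynomial time. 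Hence an efficient TAP solver would solve minimum \problem{Vertex Cover} on degree-$3$ graphs.

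I expect the main thing to get right is the bookkeeping on the occurrence bound --- specifically the asymmetry that \emph{red} features inherit the degree bound of $3$ from $G$ while \emph{blue} features always occur exactly twice --- together with the clean statement of the vertex-cover correspondence via \Cref{cor:one-red-TAP-is-red-blue-set-cover}. There is no deep new idea beyond composing the \Cref{thm:angel-devil-2-occurrence-is-easy} encoding with the known hardness of \problem{Vertex Cover} on degree-$3$ graphs; the work is purely in verifying that the encoding respects both constraints simultaneously and that solving TAP recovers a minimum vertex cover.
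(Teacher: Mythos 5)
Your proposal is correct and is essentially the paper's own proof: the paper likewise reduces from \problem{Vertex Cover} on degree-$3$ graphs (phrased as \problem{Set Cover} with each element in exactly two sets) via the construction of \Cref{thm:angel-devil-is-hard}, producing exactly your vertex-red/edge-blue exemplars and the same occurrence bookkeeping. The correctness argument through \Cref{cor:one-red-TAP-is-red-blue-set-cover} and the margin $|E|-|C|$ matches the paper's as well.
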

\begin{proof}
  Following Carr et al.'s hardness proof for \problem{Red-Blue Set Cover}~\cite{carr2000rbsc}, we reduce from \problem{Set Cover}.
  (The construction will exactly match that of Carr et al.; \Cref{cor:one-red-TAP-is-red-blue-set-cover} will establish its relevance to TAP.)
  Given a \problem{Set Cover} instance with groundset $U$ and a collection $\mathcal{C}$ of subsets of $U$, construct this TAP instance:
  \begin{itemize}
  \item There are $|U|$ blue elements $\set{b_1, \ldots, \smash{b_{|U|}}}$ and $|\mathcal{C}|$ red elements $\set{r_1, \ldots, \smash{r_{|\mathcal{C}|}}}$.
  \item For each set $S_i \in \mathcal{C}$ and for each $j \in S_i$, define a subset $\set{r_i, b_j}$.
    (Thus there are $|S_i|$ subsets corresponding to $S_i$, and $\sum_i |S_i|$ subsets in total.)
  \end{itemize}
  Let $\mathcal{S'}$ be an optimal set of subsets for this TAP instance.
  By \Cref{cor:one-red-TAP-is-red-blue-set-cover}, we can efficiently augment $\mathcal{S'}$ into an optimal TAP solution that contains every blue element; without loss of generality, then, assume $\mathcal{S'}$ contains all blue elements.
Write $R$ to denote the set of red elements contained in $\mathcal{S'}$.
The margin of $\mathcal{S'}$ is precisely $n - |R|$.
But there is a direct correspondence between sets of red elements and subsets of $\mathcal{C}$, so $R$ translates directly into a set cover for $U$, with cost exactly $|R|$.
By the TAP-optimality of $\mathcal{S'}$, then, $\mathcal{S'}$ contains the smallest possible number of red elements (i.e., sets in $\mathcal{C}$) while including all blue elements --- and thus $R$ is an optimal solution to the \problem{Set Cover} instance.

  \problem{Set Cover} remains hard even when each groundset element occurs in exactly two sets (i.e., \problem{Vertex Cover}).
  When the maximum degree of the graph is $d$ (in which case \problem{Vertex Cover} is sometimes called \problem{VC-$d$}), each blue element (= edge) occurs exactly twice (once per endpoint) and each red element (= node) occurs exactly $d$ times (once per incident edge). Thus the resulting TAP instance is $\max(2,d)$-occurrence and $2$-weight (each subset contains one red and one blue element).
  The theorem follows from the fact that
  \problem{VC-3} is NP-hard~\cite{greenlaw95:cubic,alimonti97:_hardn_approx_probl_cubic_graph,gjs76}
  and the fact that our TAP instance can be constructed efficiently.
  \qed
\end{proof}

\begin{restatable}{theorem}{FourWeightTwoOccurrenceHard}
  \label{note:w=4:k=2}
  $2$-occurrence, $4$-weight TAP is NP-hard.
\end{restatable}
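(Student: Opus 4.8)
The plan is to mirror the reduction behind \Cref{note:w=2:k=3}, again starting from \problem{Vertex Cover} on graphs of maximum degree~$3$ (which is NP-hard~\cite{greenlaw95:cubic,alimonti97:_hardn_approx_probl_cubic_graph,gjs76}), but to \emph{fold} all edges incident to a vertex into a single exemplar rather than spreading them across many weight-$2$ exemplars. Concretely, given a graph $G = (V,E)$ with maximum degree~$3$, I would build a TAP instance with one blue feature $b_e$ per edge $e$, one red feature $r_v$ per vertex $v$, and, for each vertex $v$, a single exemplar $E_v = \set{r_v} \cup \set{b_e : e \text{ incident to } v}$. This is the natural ``dual'' of the \Cref{note:w=2:k=3} construction: there the red (vertex) feature occurred $\deg(v)$ times in weight-$2$ exemplars, whereas here we trade occurrence for weight so that each red feature occurs once and each blue feature occurs twice.

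The first thing to check is that this instance meets both constraints. Each exemplar $E_v$ contains one red feature and $\deg(v) \le 3$ blue features, so its weight is at most~$4$; this is exactly where the degree-$3$ bound is needed (and it explains why the construction does not accidentally settle the open $w = 3$ case, since dropping to maximum degree~$2$ would make \problem{Vertex Cover} tractable). For occurrence, each $r_v$ appears only in $E_v$, and each $b_e$ appears in exactly the two exemplars corresponding to $e$'s endpoints, so the instance is $2$-occurrence. Note that every exemplar is purple with exactly one red feature, so the instance is \emph{one-red} in the sense of \Cref{cor:one-red-TAP-is-red-blue-set-cover}.

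Next I would establish the correspondence to \problem{Vertex Cover}. Since the instance is one-red, \Cref{cor:one-red-TAP-is-red-blue-set-cover} guarantees an optimal solution $S'$ that covers every blue feature, i.e., every edge of $G$. Writing $C = \set{v : E_v \in S'}$, such an $S'$ covers all edges exactly when $C$ is a vertex cover, and in that case the covered red features are precisely $\set{r_v : v \in C}$, so the margin of $S'$ equals $|E| - |C|$. Maximizing the margin over solutions covering all blue features therefore amounts to minimizing $|C|$ over vertex covers, so the optimal margin is exactly $|E| - \tau(G)$, where $\tau(G)$ is the minimum vertex cover size. Hence a polynomial-time TAP solver would decide whether $G$ has a vertex cover of size at most~$k$ by testing whether the optimal margin is at least $|E| - k$, establishing NP-hardness.

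The main obstacle to watch is ensuring that the optimum genuinely corresponds to a \emph{full} cover rather than to some cheaper partial selection that ties the margin; this is precisely what \Cref{cor:one-red-TAP-is-red-blue-set-cover} delivers, so I would lean on it rather than re-deriving the equivalence. (If a self-contained argument were preferred, one can show directly that $\max_{C \subseteq V}\bigl(|\set{e : e \text{ is covered by } C}| - |C|\bigr) = |E| - \tau(G)$, since deleting $t$ edges lowers the minimum vertex cover by at most~$t$, so dropping $t$ edges can never save more than $t$ vertices and hence no partial selection can beat a minimum cover.) Everything else---the degree bookkeeping and the margin computation---is routine.
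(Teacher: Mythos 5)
Your proposal is correct and follows essentially the same route as the paper: the paper's proof ``collates'' the \problem{VC-3} construction of \Cref{note:w=2:k=3} by merging the $\deg(u)$ weight-$2$ exemplars sharing the red feature $r_u$ into the single exemplar $\set{r_u}\cup\set{b_e : e \ni u}$, which is exactly your $E_v$, and both arguments then reduce to the margin-equals-$|E|-|C|$ correspondence underpinned by \Cref{cor:one-red-TAP-is-red-blue-set-cover}. The only cosmetic difference is that the paper packages the merging step as a general ``collation'' equivalence for one-red instances, whereas you build the collated instance directly.
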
%
\begin{proof}[sketch; see \Cref{sec:proofs:both-constrained}]
We adapt the construction from \Cref{note:w=2:k=3}: merge into a single set the $\mathit{degree}(u)$ subsets that include the red element corresponding to node $u$.
This ``merging'' converts $\mathit{degree}(u)$ subsets of weight $2$ into one subset of weight $1 + \mathit{degree}(u)$, but does not affect the optimal margin.
\qed
\end{proof}

\section{Greedy Approximation of 2-Weight TAP}
\label{sect:2-weight:approx}

We now turn from exact algorithms to approximations, and give an efficient algorithm for one-red TAP (with no constraint on occurrences), which includes the 2-weight case.
Specifically, we show that the classic greedy \textsf{Set Cover} algorithm yields a 0.5-approximation for one-red TAP instances.

In a one-red instance, we can speak of a red element ``covering'' a blue element.
Under the one-red constraint, the choice to absorb the cost of any particular red element $r$ then allows us to reap the benefit of any blue element that appears in a subset with it.  So a subset like $\set{r, b_1, b_2}$, with the one-red form, means that $r$ covers $b_1$ and $b_2$.  (It might also cover other blue elements because of a different subset that also contains the same element $r$.)  

In the context of one-red TAP, \GREEDY behaves as follows: in each iteration, it picks the red element that covers the largest number of (currently uncovered) blue elements, repeating until all blue elements are covered.

To analyze the performance of \GREEDY, we appeal to existing literature on the performance of this same algorithm on the \textsf{Set Cover} problem.
The key fact is the following result, due independently to Parekh~\cite{parekh1988setcover} and Slav{\'\i}k~\cite{slavik1996tight}:
\begin{restatable}[Parekh~\cite{parekh1988setcover}, Slav{\'\i}k~\cite{slavik1996tight}]{lemma}{ParekhSlavik}%
  \label{parekh-slavik}%
  Let $m_i$ denote the number of elements covered by the $i$th iteration of \GREEDY applied to \problem{(Unweighted) Set Cover}. Then
\begin{equation} \label{parekh-slavik-equation}
  m_i \ge \left\lceil \frac{n - \sum_{j=1}^{i-1} m_j }{\OPT_{\text{SC}}} \right\rceil,
\end{equation}
where the groundset has size $n$ and the size of the optimal set cover is $\OPT_{\text{SC}}$.
\end{restatable}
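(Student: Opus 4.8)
The plan is to prove this via a clean averaging (pigeonhole) argument that pits the greedy choice at iteration $i$ against a fixed optimal set cover. The driving observation is that whatever elements remain uncovered when the $i$th iteration begins must \emph{all} be covered by the optimal solution; hence, by distributing those uncovered elements among the $\OPT_{\text{SC}}$ optimal sets, at least one optimal set must account for a large fraction of them, and greedy---which takes the set of maximum coverage---does at least as well. First I would set up the bookkeeping: since the first $i-1$ iterations have covered $\sum_{j=1}^{i-1} m_j$ elements, exactly $U_i \ceq n - \sum_{j=1}^{i-1} m_j$ elements remain uncovered at the start of iteration $i$, so it suffices to show $m_i \ge \lceil U_i / \OPT_{\text{SC}}\rceil$.

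The core of the argument is then two steps. Fix an optimal set cover $\mathcal{O}$ consisting of $\OPT_{\text{SC}}$ sets. Because $\mathcal{O}$ covers the entire groundset, it in particular covers all $U_i$ of the currently-uncovered elements; assigning each uncovered element to one optimal set that covers it distributes $U_i$ elements among $\OPT_{\text{SC}}$ sets, so the average number of uncovered elements covered by a single set of $\mathcal{O}$ is $U_i/\OPT_{\text{SC}}$, and by averaging some set $O \in \mathcal{O}$ covers at least $U_i/\OPT_{\text{SC}}$ uncovered elements. Since the greedy algorithm selects a set covering the maximum number of uncovered elements, and $O$ is one available candidate, the set greedy picks covers at least as many, giving $m_i \ge U_i/\OPT_{\text{SC}}$. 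As $m_i$ is a nonnegative integer while $U_i/\OPT_{\text{SC}}$ need not be, I can round up to obtain $m_i \ge \lceil U_i/\OPT_{\text{SC}}\rceil$, which is precisely \eqref{parekh-slavik-equation}.

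Candidly, there is no serious obstacle here: the whole proof collapses to a single line of pigeonhole once the count of uncovered elements is named correctly. The only points deserving a moment's care are (i) justifying that every set of $\mathcal{O}$ remains an available candidate for greedy at iteration $i$---which it is, since greedy ranges over all sets and any already-chosen set contributes zero new coverage, so it could never be the strict maximizer once nonempty coverage is available; and (ii) the validity of the ceiling, which follows because $m_i$ is integral. The statement is attributed to Parekh and Slav{\'\i}k, and indeed the tightness they establish (that this per-iteration bound cannot in general be improved) is the subtler half of their work, but the inequality itself is exactly this elementary averaging bound.
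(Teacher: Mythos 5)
Your proof is correct. Note, though, that the paper does not prove this lemma at all: it is imported as a black-box citation to Parekh and Slav\'{\i}k, so there is no in-paper argument to compare against. Your self-contained derivation is the standard one and it is sound: with $U_i = n - \sum_{j=1}^{i-1} m_j$ elements still uncovered at the start of iteration $i$, the $\OPT_{\text{SC}}$ sets of a fixed optimal cover jointly cover all of them, so by averaging some optimal set covers at least $U_i/\OPT_{\text{SC}}$ of them; greedy's choice does at least as well, and integrality of $m_i$ upgrades the bound to the ceiling. Your parenthetical worry about already-chosen optimal sets resolves itself even more simply than you suggest: a set greedy has already taken covers zero currently-uncovered elements, so it cannot be the set the averaging argument produces when $U_i > 0$, and when $U_i = 0$ the claimed bound is vacuous. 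The one thing your write-up correctly flags but does not (and need not) supply is the content the citation is actually carrying in the cited works --- the tightness of the bound --- which the paper does not rely on either.
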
%
\noindent%
Or, to restate this in the TAP context:
let $m_i$ denote the number of blue elements covered in the $i$th iteration of \GREEDY applied to one-red TAP, and let $\OPT_{\text{SC}}$ denote the number of red elements in the optimal TAP solution that covers all $n$ blue elements. Then (\ref{parekh-slavik-equation}) holds.

We leverage \Cref{parekh-slavik} to prove our main result: \GREEDY is a $\frac12$-approximation for one-red TAP (including $2$-weight TAP).
It turns out that showing that \GREEDY covers ``enough'' elements in its first ``few'' iterations will be sufficient to establish \Cref{thm:greedy-is-a-1/2-approx-for-angel-devil}.
Specifically, we first prove the following lemma (where the value of $K$ formalizes the notion of ``few,'' which, it is worth noting, depends on the instance's structure):

\begin{restatable}{lemma}{GreedyApproximationLemmaOne}
  \label{lem:greedy-lemma-1}
  Let $K \ceq \left \lceil \frac12 (n - \OPT_{\text{SC}}) \right \rceil$.
  Then \GREEDY covers at least $2K$ blue elements in its first $K$ moves.
\end{restatable}%
\begin{proof}[sketch; see \Cref{sec:proofs:greedy}]
  We argue by induction on $k$ that, for any $k \le K$, the first $k$ iterations of \GREEDY cover at least $2k$ blue elements.
  The base case is immediate by \Cref{parekh-slavik} if $\OPT_{\text{SC}} < n$ (and if $\OPT_{\text{SC}} = n$ then $K = 0$ and the lemma is vacuous).
  For the inductive case, we consider two possibilities (which are exhaustive by the inductive hypothesis):
  \begin{description}
  \item[\rm \em Case 1:]
    \GREEDY is ``ahead of schedule,'' having covered at least $2k-1$ blue elements in its first $k-1$ iterations.
    Then \GREEDY need only cover one additional blue element in its $k$th iteration, which it must do (unless the algorithm already terminated before the $k$th iteration).
    
  \item[\rm \em Case 2:]
    \GREEDY is ``right on schedule,'' having covered exactly $2k-2$ blue elements in its first $k-1$ iterations.
    In this case, the lemma requires \GREEDY to cover at least two elements on its current iteration.
    Here, we appeal to \Cref{parekh-slavik}'s lower bound on $m_k$ --- noting that by the definition of Case 2 we know that the summation in the numerator in (\ref{parekh-slavik-equation}) is exactly $2k - 2$ --- and algebraic manipulation (and the fact that $k \le K$) to establish the result.
  \qed
  \end{description}
\end{proof}%
\begin{restatable}{theorem}{GreedyOneHalfApproximationOneRed}
  \label{thm:greedy-is-a-1/2-approx-for-angel-devil}
  \GREEDY is a $\frac12$-approximation for the one-red TAP problem
  (and therefore \GREEDY is a $\frac12$-approximation for 2-weight TAP).
\end{restatable}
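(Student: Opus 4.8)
The plan is to recast the desired approximation guarantee as a bound on the number of iterations that \GREEDY performs. After the preprocessing of \Cref{sec:solving-w-weight}, every exemplar is purple with exactly one red feature, so a one-red instance is literally an (unweighted) \problem{Set Cover} instance: each red feature $r$ acts as a ``set'' consisting of all blue features that co-occur with $r$ in some exemplar, and because every exemplar containing $r$ has $r$ as its \emph{only} red feature, committing to $r$ never incurs a second red cost. Running \GREEDY on the TAP instance is therefore exactly the greedy \problem{Set Cover} algorithm on this derived instance, selecting one red feature per iteration and continuing until all $n$ blue features are covered. Hence the margin of \GREEDY is precisely $n - g$, where $g$ is the number of iterations, and by \Cref{cor:one-red-TAP-is-red-blue-set-cover} the optimal margin is attained by a full-cover solution and so equals $n - \OPT_{\text{SC}}$. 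The theorem thus reduces to the single inequality $g \le \tfrac12 (n + \OPT_{\text{SC}})$.

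To establish this, I would extract from \Cref{parekh-slavik} its key structural consequence: writing $R_{i-1}$ for the number of blue features still uncovered before iteration $i$, the bound $m_i \ge \lceil R_{i-1} / \OPT_{\text{SC}} \rceil$ forces $m_i \ge 2$ whenever $R_{i-1} > \OPT_{\text{SC}}$. Since $R_{i-1}$ decreases monotonically, the iterations split into an initial ``efficient'' phase of some $k$ iterations (those with $R_{i-1} > \OPT_{\text{SC}}$, each covering at least two fresh blue features) followed by a ``tail'' phase (those with $R_{i-1} \le \OPT_{\text{SC}}$, each covering at least one). The efficient phase covers $n - R_k$ blue features with $R_k \le \OPT_{\text{SC}}$, giving $2k \le n - R_k$; the tail must then cover the remaining $R_k$ features one at a time, so it contributes at most $R_k$ further iterations. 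Adding these bounds, $g \le \tfrac{n - R_k}{2} + R_k = \tfrac{n + R_k}{2} \le \tfrac{n + \OPT_{\text{SC}}}{2}$, and substituting into $n - g$ yields $n - g \ge \tfrac12 (n - \OPT_{\text{SC}})$, exactly the claimed $\tfrac12$-approximation.

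I expect the main obstacle to be less the counting than the justification that \Cref{parekh-slavik} applies here at full strength. A naive appeal to the classical $H_n$ harmonic guarantee would only bound $g$ by $\OPT_{\text{SC}} \cdot H_n$, which is far too weak (it can make the estimated margin negative); the whole point is that the Parekh--Slav\'{\i}k \emph{per-iteration} bound is strong enough to guarantee two-at-a-time progress throughout the bulk of the run, and it is this ``$\ge 2$'' consequence alone that produces the factor $\tfrac12$. The remaining care is in the framing and boundary steps: confirming that the TAP-to-\problem{Set Cover} correspondence makes $\OPT_{\text{SC}}$ genuinely the minimum red-feature cover (so that $n - \OPT_{\text{SC}}$ is the true optimal margin via \Cref{cor:one-red-TAP-is-red-blue-set-cover}), verifying $R_k \le \OPT_{\text{SC}}$ at the phase boundary, and checking that the degenerate cases (no efficient iterations, or $\OPT_{\text{SC}} = n$ so that both margins are governed by single-cover steps) are subsumed by the same inequality.
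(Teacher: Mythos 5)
Your proof is correct, and it rests on the same engine as the paper's --- the Parekh--Slav\'{\i}k per-iteration bound forcing ``two new blue features per move'' for as long as many blue features remain uncovered --- but the bookkeeping is organized differently, and arguably more cleanly. The paper fixes the number of good iterations in advance, setting $K = \lceil \tfrac12(n - \OPT_{\text{SC}})\rceil$, and then proves by induction (\Cref{lem:greedy-lemma-1}) that the first $K$ iterations collectively cover at least $2K$ blue features; the induction needs a two-case analysis (``ahead of schedule'' vs.\ ``on schedule'') and a small ceiling computation showing $n - (2k-2) \ge \OPT_{\text{SC}} + 1$ for $k \le K$. You instead let the run itself determine the phase boundary: the efficient phase consists of exactly those iterations with residual $R_{i-1} > \OPT_{\text{SC}}$ (a prefix, by monotonicity of $R$), each of which covers at least two features directly from $m_i \ge \lceil R_{i-1}/\OPT_{\text{SC}}\rceil \ge 2$, and the counting $g \le \tfrac{n-R_k}{2} + R_k \le \tfrac{n+\OPT_{\text{SC}}}{2}$ is then immediate, with no induction and no case split. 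Both routes use the identities $\OPT_{\text{TAP}} = n - \OPT_{\text{SC}}$ (via \Cref{cor:one-red-TAP-is-red-blue-set-cover}) and $\GREEDY_{\text{TAP}} = n - g$, and both correctly flag that the classical $H_n$ guarantee is useless here. What your version buys is brevity and a transparent reason for the factor $\tfrac12$ (two-per-move until the residual drops below $\OPT_{\text{SC}}$, then one-per-move for at most $\OPT_{\text{SC}}$ more steps); what the paper's version buys is an explicit, reusable statement (\Cref{lem:greedy-lemma-1}) quantifying progress after a prescribed number of iterations. Your handling of the degenerate cases ($k=0$, $\OPT_{\text{SC}} = n$) is also sound.
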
%
\noindent%
\begin{proof}[sketch; see \Cref{sec:proofs:greedy}]
  \Cref{lem:greedy-lemma-1} implies that, after the first $K$ ``good'' moves (covering at least two blue elements per move, on average), only $n - 2K$ additional iterations are required (each covering at least one of the $\mathop{\le} n - 2K$ as-yet-uncovered blue elements).
  The careful choice of $K$ ensures that covering all $n$ blue elements with only $K + (n - 2K) = n - K$ red elements is within a $\tfrac12$ factor of optimal.
  \qed
\end{proof}
%

The bound on the approximation ratio is tight: there are $2$-weight TAP instances in which \GREEDY achieves only $\smash{\frac12} \cdot \OPT$ (see \Cref{ex:bad-example-for-greedy} in \Cref{sec:proofs:greedy}).

\begin{note}
  Following~\Cref{parekh-slavik}, write $m_i$ to denote the number of elements covered by the $i$th iteration of the greedy algorithm for \problem{Set Cover}.
  Write $g$ to denote the number of iterations before $\GREEDY$ terminates.
  Then we have that $m_1 \ge m_2 \ge \cdots \ge m_g$ (and $\sum_{i=1}^g m_i = n$), by the definition of \GREEDY.
  (Parekh~\cite{parekh1988setcover} observes this fact explicitly.)
  Thus $m_1 = \max_i m_i$.

  If $m_1 = 2$, then, the sequence $m_{1, \ldots, g}$ takes the form $\langle 2, \cdots, 2, 1, \cdots 1 \rangle$: that is, \GREEDY makes a sequence of ``$2$-moves'' (each covering two new elements) followed by a sequence of ``$1$-moves'' (each covering one new element), until it terminates. 
  Case 2 of the proof of \Cref{lem:greedy-lemma-1} arises precisely when $m_1 = m_2 = \cdots = m_{k-1} = 2$; our proof argues that $m_k$ must also equal $2$ (under the assumption that $k \le K$).
  (Case 1 of the proof can arise only if $m_1 \ge 3$.
  This case corresponds to an instance in which \GREEDY at some point ``gets ahead'' of the 2-move pace --- i.e., $m_1 + m_2 + \cdots m_i > 2i$ for some $i$ --- and, because $i < K$, we still had to argue that $m_1 + m_2 + \cdots m_{i+1} \ge 2i+2$.
  But because the $m$ values are nonincreasing, \GREEDY at some point gets ahead of the 2-move pace if and only if its \emph{first} move is ahead of the 2-move pace.)
\end{note}



\section{Inapproximability Results}
\label{sect:inapproximability}

We now turn to inapproximability, even under restrictions on occurrence and weight.
Our results follow from, first, a tightening of \Cref{thm:2-occurrence-is-hard}, and, second, another hardness derivation based on the $k$-dimensional matching problem.
(Proofs are deferred to \Cref{sec:proofs:inapprox}.)

First, we recall a special case of \problem{MAX-$k$-SAT} called \problem{$a$-OCC-MAX-$k$-SAT}, which is \problem{MAX-$k$-SAT} with the further restriction that variables are limited to occurring in only $a$ clauses.
(Note: our inapproximability bounds for TAP are derived from known inapproximability results for \problem{$a$-OCC-MAX-$2$-SAT}~\cite{bermankarpinski1999inapprox}.
Nonetheless, we state our results in terms of the general \problem{$a$-OCC-MAX-$k$-SAT}, so that any new developments in the approximability of restricted-occurrence \problem{MAX-$k$-SAT} can be translated into the context of TAP.)


\begin{restatable}{theorem}{inapproxViaAOCCMaxKSat}
  \label{thm:inapprox-by-max-k-SAT}
  If \problem{$a$-OCC-MAX-$k$-SAT} is hard to approximate within some factor $\alpha < 1$,
  then $(a + 2\cdot\lfloor \frac{a}{2} \rfloor)$-weight $k$-occurrence TAP is hard to $\alpha$-approximate.
\end{restatable}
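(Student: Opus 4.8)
The plan is to give a polynomial-time, objective-preserving reduction from \problem{$a$-OCC-MAX-$k$-SAT} to $(a + 2\lfloor a/2 \rfloor)$-weight $k$-occurrence TAP, reusing the skeleton of the reduction in \Cref{thm:2-occurrence-is-hard} but replacing its two instance-sized gadgets (the $m+1$ penalty features per exemplar and $m+2$ reward features per variable, whose sizes grow with $\varphi$) by constant-size gadgets calibrated to the occurrence bound $a$. The central observation enabling this is that once each variable occurs in at most $a$ clauses, the only ``incentive'' for a TAP solution to behave non-assignment-like (by selecting both $X_{i,T}$ and $X_{i,F}$) is worth at most $\lfloor a/2 \rfloor$ extra covered clauses, so a penalty gadget of that bounded size suffices to suppress it — and this is exactly what lets us both bound the weight and remove the additive slack that was present in \Cref{thm:2-occurrence-is-hard}.

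Concretely, given a $k$-CNF formula $\varphi$ on variables $x_1, \dots, x_n$ and clauses $c_1, \dots, c_m$, I would build exemplars $X_{i,T}, X_{i,F}$ as before, with one blue clause feature per clause placed in exactly the (at most $k$) exemplars whose literal satisfies it. I would then attach to each variable $p \ceq \lfloor a/2 \rfloor$ blue reward features shared by $X_{i,T}$ and $X_{i,F}$, and to each individual exemplar $q \ceq \lfloor a/2 \rfloor$ private red penalty features. Two bookkeeping checks come first. The occurrence bound is immediate: clause features occur at most $k$ times, reward features twice, and penalty features once, so the instance is $k$-occurrence (for $k \ge 2$). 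For the weight bound, the exemplar $X_{i,T}$ contains one clause feature per positive occurrence of $x_i$ (at most $a$, since $x_i$ occurs in at most $a$ clauses in total) together with $\lfloor a/2 \rfloor$ reward and $\lfloor a/2 \rfloor$ penalty features, giving total weight at most $a + 2\lfloor a/2 \rfloor$, exactly as claimed.

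The heart of the argument is to show that the optimal margin equals the maximum number of simultaneously satisfiable clauses of $\varphi$, with no additive slack. The key step is a normalization lemma: any set of exemplars can be converted, without decreasing its margin, into one that selects exactly one of $X_{i,T}, X_{i,F}$ for every $i$. For a variable with neither selected, adding (say) $X_{i,T}$ contributes $p$ blue reward features and $q$ red penalty features, a net change of $p - q = 0$ before accounting for newly covered clauses, which can only help, so adding is never harmful. For a variable with both selected, dropping one removes $q = \lfloor a/2 \rfloor$ private red penalty features (leaving the shared reward features intact) while forfeiting only the clauses uniquely covered by that exemplar — and this is where the occurrence bound bites: writing $a_i^+$ and $a_i^-$ for the positive- and negative-occurrence counts (so $a_i^+ + a_i^- \le a$), the worse exemplar uniquely covers at most $\min(a_i^+, a_i^-) \le \lfloor a/2 \rfloor = q$ clauses, so dropping is never harmful either. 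Once a solution selects exactly one exemplar per variable, the $n(p - q) = 0$ reward/penalty terms cancel identically and the margin is precisely the number of clauses whose feature is covered — i.e., the number satisfied by the induced assignment $\rho$. I expect this normalization, and the matching of $q$ to the $\lfloor a/2 \rfloor$ ``cheating'' bound, to be the main obstacle, since it is what pins down the exact weight $a + 2\lfloor a/2 \rfloor$ and eliminates the additive constant of \Cref{thm:2-occurrence-is-hard}.

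Finally, I would assemble the inapproximability transfer. The normalization gives $\OPT_{\text{TAP}} = \OPT_{\text{SAT}}$ and is itself efficient, so from any TAP solution of margin $v$ one recovers in polynomial time an assignment satisfying at least $v$ clauses. Thus a hypothetical polynomial-time $\alpha$-approximation for the constructed TAP instance would yield, via the recovered assignment, a truth assignment satisfying at least $\alpha \cdot \OPT_{\text{TAP}} = \alpha \cdot \OPT_{\text{SAT}}$ clauses, contradicting the assumed hardness of approximating \problem{$a$-OCC-MAX-$k$-SAT} within $\alpha$. Because the construction runs in polynomial time and preserves the objective value exactly, the same factor $\alpha$ carries over, establishing the claim.
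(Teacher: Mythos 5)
Your proposal is correct and follows essentially the same route as the paper: the same modification of the \Cref{thm:2-occurrence-is-hard} gadget with $\lfloor a/2\rfloor$ private red penalty features per exemplar and $\lfloor a/2\rfloor$ shared blue reward features per variable, the same normalization argument (add an exemplar when neither is chosen since reward and penalty cancel; drop the one with at most $\lfloor a/2\rfloor$ clause features when both are chosen), and the same conclusion that $\OPT_{\text{TAP}}$ equals the maximum number of satisfiable clauses so the factor $\alpha$ transfers exactly. No gaps.
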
%
\noindent%
The proof follows the approach of \Cref{thm:2-occurrence-is-hard}, but with a tightened construction (using just 1 reward element per variable and 1 penalty element per subset) and more careful bookkeeping. The result
  allows inapproximability results for \problem{$a$-OCC-MAX-$k$-SAT} to carry over to TAP.
  (For details, see \Cref{sec:proofs:inapprox}.)
Known inapproximability results due to Berman and Karpinski~\cite{bermankarpinski1999inapprox} for \problem{3-Occ-Max-2-SAT} and \problem{6-Occ-Max-2-SAT} then imply hardness for certain TAP cases:%
\begin{restatable}{corollary}{inapproxViaAOCCMaxKSatinstantiated}
  \label{cor:inapprox-by-max-k-SAT:instantiated}
  $2$-occurrence, $5$-weight TAP is hard to $\frac{2011}{2012}$-approximate, and
  $2$-occurrence, $12$-weight TAP is hard to $\frac{667}{668}$-approximate.
\end{restatable}%
\noindent
(Note $\frac{2011}{2012} \mathop{\approx} 0.99953$ and $\frac{667}{668} \mathop{\approx} 0.99850$.)
Again, details are in \Cref{sec:proofs:inapprox}.

 We derive a second set of inapproximability results using the \emph{$k$-dimensional matching} problem:  we are given sets $S_1, S_2, \ldots, S_k$, and a collection of $k$-tuples $C \subseteq S_1 \times S_2 \times \cdots \times S_k$; we must find a subcollection of $C$ of the largest possible size so that no element of any $S_i$ appears more than once in the subcollection.
 (Indeed, 3-dimensional matching was one of Karp's original NP-hard problems~\cite{karp1972}.)
 Denote by \problem{MAX-$k$DM-$k$} the $k$-dimensional matching problem with the additional restriction that each element occurs in at most $k$ sets.

\begin{restatable}{theorem}{inapproxViakDMk}
\label{thm:inapprox-by-max-k-DM-k}
If \problem{MAX-$k$DM-$k$} is hard to approximate within some factor $\alpha < 1$,
then $k$-occurrence, $k$-weight TAP is hard to $\alpha$-approximate.
\end{restatable}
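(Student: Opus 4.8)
The plan is to build a gap-preserving reduction from \problem{MAX-$k$DM-$k$} that mirrors the construction behind \Cref{thm:2-occurrence-is-hard} (and its tightening in \Cref{thm:inapprox-by-max-k-SAT}), arranged so that the optimal TAP margin equals the size of a maximum $k$-dimensional matching. Given sets $S_1,\dots,S_k$ and a collection $C$ of $k$-tuples, I would introduce one exemplar $E_t$ for each tuple $t=(s_1,\dots,s_k)\in C$, together with features indexed by the ground elements $s\in\bigcup_i S_i$. The $k$-occurrence bound is then immediate: each element lies in at most $k$ tuples, so each element-feature occurs in at most $k$ exemplars. The goal is to design the coloring of the features so that, writing $p=|S'|$ for the number of chosen exemplars and $q$ for the number of distinct ground elements they touch, the margin of $S'$ is exactly $q-(k-1)p$.

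To obtain that identity I would make every element-feature blue (so the blue count equals $q$) and attach to each exemplar a fixed complement of $k-1$ private red features (so the red count equals $(k-1)p$). A matching of size $\ell$ uses $k\ell$ distinct elements, so its margin is $k\ell-(k-1)\ell=\ell$: matchings map exactly to their own size. The correspondence is completed by an exchange argument showing that some optimal TAP solution is a matching. Rewriting the margin as $p-\sum_s(c_s-1)^+$, where $c_s$ counts the chosen tuples containing element $s$ (using $\sum_s c_s=kp$ and $q=\sum_s[c_s\ge 1]$), one sees that whenever some $c_s\ge 2$, deleting a single offending exemplar lowers $p$ by one but lowers $\sum_s(c_s-1)^+$ by at least one, so the margin does not decrease. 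Iterating prunes any solution to a matching of no smaller margin; hence the optimal margin equals the maximum matching size, and this pruning is exactly the polynomial-time map needed to carry an $\alpha$-approximation for TAP back to one for \problem{MAX-$k$DM-$k$}. Because the two optima coincide, the factor $\alpha$ transfers verbatim.

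The step I expect to be the true obstacle is meeting the weight bound, and the difficulty is structural rather than cosmetic. If every feature of an exemplar were an element-feature, the margin would depend only on the \emph{union} of the chosen tuples and hence could not separate a matching from an overlapping selection with identical coverage---precisely the separation a packing problem such as matching demands. Breaking this union-dependence requires some strictly per-exemplar signal, yet the clean encoding above already spends $k$ blue (element) slots plus $k-1$ private red slots, giving weight $2k-1$ rather than the claimed $k$. The heart of the proof is therefore to reproduce the identity margin$\,=q-(k-1)p$ (equivalently, to keep the exchange argument valid) using only $k$ features per exemplar. My plan of attack is to retain just enough per-element structure to detect conflicts while pushing the tuple-counting onto shared features, exploiting that in \problem{MAX-$k$DM-$k$} the tuples meeting any fixed element form a group of size at most $k$; a merging of those grouped exemplars, in the spirit of \Cref{note:w=4:k=2}, is the natural device. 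Verifying that such a compression simultaneously preserves the margin identity, the $k$-occurrence bound, and the weight-$k$ bound is where I expect essentially all of the technical work to lie.
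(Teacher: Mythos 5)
Your skeleton---blue features for ground elements, red penalty features tied to tuples, a margin identity of the form $q-(k-1)p$ so that a matching of size $\ell$ has margin exactly $\ell$, and an exchange argument that prunes any solution to a matching of no smaller margin---is exactly the shape of the paper's argument. But the theorem claims $k$-weight, your construction only achieves weight $2k-1$, and you have correctly identified that closing this gap is the real content; unfortunately the device you propose for closing it points in the wrong direction. Merging exemplars that meet a fixed element cannot work as described: the tuples containing a given element do not partition $C$ (each tuple would belong to $k$ such groups), and fusing tuples that share an element destroys precisely the independence you need to select tuples of a matching separately. So as it stands the proposal proves hardness only for $k$-occurrence, $(2k-1)$-weight TAP, which is a strictly weaker statement.

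The paper's trick is the opposite of merging: it \emph{splits} each tuple $m_j$ into $k$ exemplars $X_{1,m_j},\ldots,X_{k,m_j}$, where $X_{i,m_j}$ contains the single blue feature for the $i$th component of $m_j$ together with all $k-1$ red features $r^j_1,\ldots,r^j_{k-1}$ of that tuple --- the red features are \emph{shared} across the $k$ exemplars of a tuple rather than private to one exemplar. Each exemplar then has weight $1+(k-1)=k$; each red feature occurs exactly $k$ times and each blue feature at most $k$ times (by the occurrence bound on the matching instance), so both constraints hold. Your exchange argument then adapts cleanly to this ``canonical solution'' setting: once any one exemplar of a tuple's group is chosen, the $k-1$ red features are already paid for, so the remaining $k-1$ exemplars of that group can be added at no cost (and possibly with blue gain); and if two fully-chosen groups share an element, deleting one group removes $k-1$ red features and at most $k-1$ blue features, so the margin does not decrease. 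This recovers your identity (a canonical solution of $|I|$ tuples has margin $k|I|-(k-1)|I|=|I|$) at weight $k$, which is the piece your proposal is missing.
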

\begin{proof}[sketch; see \Cref{sec:proofs:inapprox}]
  Given sets $S_1, S_2, \ldots, S_k$, and $k$-tuples $C \subseteq S_1 \times S_2 \times \cdots \times S_k$, construct a TAP instance with one blue element for each element of $\bigcup_i S_i$ and $k-1$ red elements for each $k$-tuple $c \in C$.
  Build $k$ subsets corresponding to $c$, one for each index $i$, each containing these $k-1$ red elements and the blue element corresponding to the element of $S_i$ found in $c$.
  The resulting TAP instance is $k$-occurrence, $k$-weight.
  Define a ``canonical'' solution for this TAP instance as one that (i) contains either \emph{none} of the $k$ subsets corresponding to each $k$-tuple $c$, or \emph{all} of them, and (ii) never contains the same blue element more than once.
  We argue that there is an optimal canonical solution, and that one can be efficiently constructed from a not-necessarily-canonical optimal solution.
  Finally, we show that an $\alpha$-approximate canonical TAP solution corresponds directly to an $\alpha$-approximate \problem{MAX-$k$DM-$k$} solution.
  \qed
\end{proof}

Known hardness results by Hazan, Safra, and Schwartz~\cite{hazan2003:kDM} and Chleb{\'\i}k and Chleb{\'\i}kov{\'a}~\cite{chlebik03:inapprox} for \problem{MAX-$k$DM-$k$} imply hardness for additional special cases of TAP (again, details are in \Cref{sec:proofs:inapprox}):
\begin{restatable}{corollary}{inapproxViaKDMKinstantiated}
    \label{cor:inapprox-by-max-k-DM-k:instantiated}
    \label{cor:inapprox-by-max-k-DM:instantiated}
    Unconstrained TAP is hard to $c$-approximate, for any constant $c > 0$.
    Further, 
    $3$-occurrence, $3$-weight TAP is hard to $\frac{94}{95}$-approximate ($\mathop{\approx} 0.98947$)
    and
    $4$-occurrence, $4$-weight TAP is hard to $\frac{47}{48}$-approximate ($\mathop{\approx} 0.97917$).
  \end{restatable}%
\noindent%
Note: \Cref{cor:inapprox-by-max-k-DM-k:instantiated}'s bounds are stronger than \Cref{cor:inapprox-by-max-k-SAT:instantiated}'s when both apply; it is only for $2$-occurrence TAP that our tightest bounds come from \Cref{cor:inapprox-by-max-k-SAT:instantiated}.


\section{Future Work}
\label{sec:open-questions}

Our results on TAP --- including hardness, efficient exact algorithms, and efficient approximation algorithms --- are summarized in \Cref{fig:inapproximability-results-summary}.
These leave several interesting open problems, which we  briefly outline here.

\paragraph*{$2$-occurrence, $3$-weight TAP.}
Our most natural open question is the existence of an efficient algorithm for $2$-occurrence, $3$-weight TAP.
Some of the essential structural properties that allowed an efficient solution for $2$-occurrence, $2$-weight TAP --- roughly, that here TAP reduces to a \problem{Vertex Cover} instance in a graph whose connected components are all cycles or paths --- seem to have rough analogs in $2$-occurrence, $3$-weight TAP.
A potentially promising angle is the fact that \problem{Vertex Cover} is tractable in more general families of graphs, such as graphs with bounded treewidth~\cite{arnborg89:bounded-treewidth-IS}.
We harbor some hope that an efficient algorithm for $2$-occurrence, $3$-weight TAP might emerge through this approach, though we have not yet been able to realize that hope (or to find an argument for hardness).

\paragraph*{Generalized (especially for $3$-weight) or improved approximations.}
Thus far, we have been unable to successfully generalize our 0.5-approximation for $2$-weight, arbitrary-occurrence TAP to broader settings.
Still, as mentioned above, the 3-weight constraint in particular appears to leave intact some of the helpful structural properties that enabled our approximation algorithm.
Is it possible to efficiently approximate $3$-weight TAP?

For $2$-weight TAP, the analysis of the greedy algorithm is tight, but perhaps this approximation be improved by modifying to \GREEDY, or through a different algorithm entirely.
One possible approach involves semi-definite programming: the classic Goemans--Williamson SDP for \problem{MAX-$2$-SAT}~\cite{goemans1994maxcut} has some structural similarities to an SDP for \problem{MIN-$2$-SAT}~\cite{avidor2002approximating}; might it be possible to combine them (using the former for blue elements and the latter for red elements) in some way? Or might there be an approach based on a linear program for \problem{MIN-$2$-SAT}~\cite{bertsimas1999dependent,kohli1994minimum}?

Relatedly, there are sizable gaps between our algorithmic and inapproximability results; might these gaps be closed? Specifically, we might hope to establish inapproximability results for the $2$-weight case (with no occurrence restriction).

\paragraph*{Efficient exact algorithms for additional cases of TAP.}
We concentrated in this paper on restricted-occurrence and restricted-weight instances of TAP.
But there are other tractable special cases, and it is an interesting direction to explore other kinds of special cases that admit an efficient algorithm.
For example, we can determine whether an arbitrary TAP instance can be solved with margin $\mathop{\ge} |B| - k$ in time exponential in $k$, by enumerating all size-$k$ subsets of the groundset (and, for each selected subset, testing whether it's possible to cover all unselected blue elements while covering only the selected red elements).
As another example: although one-red instances are generally hard, under the one-red constraint it suffices to find the optimal \problem{Red-Blue Set Cover},
and \problem{Red-Blue Set Cover} is efficiently solvable in certain geometric settings~\cite{madireddy2021geometric,10.1016/j.ipl.2024.106485,CHAN2015380} or (treating subsets as columns and elements as rows in a binary matrix) instances with the consecutive ones property~\cite{dom2008red,chang2010improved}.
We might hope to find broader sets of TAP instances that are efficiently solvable.

\bigskip\noindent%
Finally, and most broadly, we see as a primary contribution of this paper the introduction of a (to the best of our knowledge) novel problem with an understudied \emph{style} of objective function for covering-type problems.
First, we both seek to reap reward (for desirable elements) and to avoid cost (for undesirable elements).
Second, we have the freedom to ignore ``regions'' of the input where the costs outweigh the benefits; our algorithms have the choice of how much to try to explain.
We see TAP as an fascinating combinatorial problem with these two properties.
What other applications can we address, either with TAP itself, or with other, similarly motivated combinatorial problems?

\begin{credits}
\subsubsection{\ackname} 
{This work was supported in part by \censor{Carleton College}.
We are grateful to \censor{Yang Tan} for work during earlier stages of this project.}

\subsubsection{\discintname}
The authors have no competing interests to declare that are
relevant to the content of this article.
\end{credits}

%
%
%
\bibliography{terse}
\newpage
\appendix
\section{Deferred Proofs: Restricted Weight and Restricted Occurrence}
\label{sec:proofs:restricted-weight}
\label{sec:proofs:restricted-occurrence}
\label{sec:proofs:both-constrained}


\subsection{TAP with Restricted-Weight Subsets}

\OneWeightTractable*
\begin{proof}
  Consider a TAP instance in which every subset has zero blue-weight or zero red-weight.
  By \Cref{claim:delete-pure-red+add-pure-blue}, we can always include any zero-red-weight subset and exclude any zero-blue-weight subset while only improving our margin.
  Then the obvious algorithm --- select all subsets with zero red-weight (and no subsets with nonzero red-weight) --- computes a set $\mathcal{S}'$ of subsets that includes all blue elements and no red elements.  (Recall that we assume that each element appears at least once, so every blue element appears in some subset, which has zero red-weight by assumption.)  Thus it gets all elements correct and none incorrect, and therefore has a margin of $n$, the maximum possible margin.

  In $1$-weight TAP, each subset $S_i$ has only one element, which cannot be both red and blue, so $S_i$ has either zero blue-weight or zero red-weight.
  \qed
\end{proof}

\OneRedIsRedBlueSetCover*
\begin{proof}
  Let $x$ be any blue element not present in $\mathcal{S}'$.
  By our universal assumption that all elements appear in at least one subset, there exists a subset $S_i$ that contains $x$.
  Adding $S_i$ to $\mathcal{S}'$ yields one more blue element and at most one additional red element: respectively, $x$ and the unique red element in $S_i$.
  (The ``at most'' is because the red element in $S_i$ may already appear elsewhere in $\mathcal{S}'$.)
  Thus the margin of $\mathcal{S}' \cup \set{S_i}$ is no worse than the margin of $\mathcal{S}'$.
  \qed
\end{proof}

\subsection{TAP with Restricted-Occurrence Elements}

\OneOccurrenceTractable*
\begin{proof}
  By \Cref{claim:disjoint-subsets-sum-margins}, we can split any TAP instance into a collection of connected subinstances (i.e., the connected components of the cooccurrence graph, in which the nodes are elements and a pair of elements is joined by an edge if they occur together in any subset), and solve each component independently.
  Consider the margin of each individual subset $S_i$:  that is, $S_i$'s blue-weight minus $S_i$'s red-weight.
  The obvious algorithm is to include the set of subsets with positive margin.
  \Cref{claim:disjoint-subsets-sum-margins} implies that these decisions can indeed be made independently, and thus the resulting set of subsets is optimal.

  Specifically, then, $1$-occurrence TAP is easy: each element appears in only one subset, and thus all pairs of subsets are necessarily disjoint.
  \qed
\end{proof}

\TwoOccurrenceHard*
\begin{proof}
  We will reduce \textsf{Max $k$-SAT} to $k$-occurrence TAP.
  %
%
%
  We begin from an arbitrary instance of the \textsf{Max $k$-SAT} problem --- that is, an arbitrary Boolean formula $\varphi$ in $k$-CNF --- and we construct an instance of $k$-occurrence TAP.  

    Let the $n$ variables of $\varphi$ be $x_1, \ldots, x_n$ and let the $m$ clauses of $\varphi$ be denoted by $c_1, \ldots, c_m$. 
    We construct a groundset and a set of $2m$ subsets, as follows.
    We will describe the subsets with a binary matrix whose columns represent subsets and whose rows represent elements.
    (See \Cref{fig:k-occurrence-hardness-from-k-sat}.)

    There are two subsets corresponding to each variable $x_i$ in $\varphi$: one subset $X_{i,T}$ representing $x_i$ being set to \textsf{True} and one subset $X_{i,F}$ representing $x_i$ being set to \textsf{False}.
    There are three categories of elements (i.e., rows in the matrix), with $m + 2n(m+1) + n(m+2)$ elements in total:
    \begin{itemize}
    \item \emph{$m$ clause elements.}
      For clause $c$ in $\varphi$, the element corresponding to clause $c$ appears in the $k$ subsets that satisfy that clause.
      For each positive literal $x_j$ in clause $c$ in $\varphi$, the subset $X_{j,T}$ contains element $c$; for each negative literal $\overline{x_j}$ in $c$, the subset $X_{j,F}$ contains element $c$.
      All clause elements are blue.
      
    \item \emph{$2n(m+1)$ penalty elements.}
      For each subset, there are $m+1$ elements found exclusively in that subset.
      All penalty elements are red.
      In other words, any chosen subset incurs a cost of $m+1$.

    \item \emph{$n(m+2)$ reward elements.}
      For each variable $x_i$ in $\varphi$, there are $m+2$ elements found in both of the two subsets corresponding to $x_i$ (i.e., $X_{i,T}$ and $X_{i,F}$) and nowhere else.
      All reward elements are blue.
      In other words, choosing  $X_{i,T}$ or $X_{i,F}$ or both generates a benefit of $m+2$.
    \end{itemize}
    (Together, the penalty and reward elements are designed to ensure that an optimal solution to this TAP instance will correspond to a truth assignment --- specifically including one, but only one, of $X_{i,T}$ and $X_{i,F}$.)

    It is straightforward to construct the set of elements, the target, and the set of subsets efficiently from $\varphi$.
    What remains to be shown is that optimally solving TAP yields a solution to \textsf{Max $k$-SAT}.
    Suppose that a set $\mathcal{S}^\ast$ of subsets is an optimal solution to the constructed instance of TAP.
    Observe the following facts:

    \begin{enumerate}
    \item \emph{The instance of TAP satisfies the $k$-occurrence constraint.}

      Because $\varphi$ was in $k$-CNF, each clause of $\varphi$ contains at most $k$ (possibly negated) literals, so each clause element is similarly found in at most $k$ subsets.
      Each reward element appears in two subsets, and each penalty element appears in only one.
      (And $k \ge 2$.)

    \item \emph{For any variable $i$, the set $\mathcal{S}^\ast$ contains exactly one of $X_{i,T}$ and $X_{i,F}$.
        (That is, the set $\mathcal{S}^\ast$ must correspond to a truth assignment for the variables in $\varphi$.)}

      If neither $X_{i,T}$ nor $X_{i,F}$ is present in $\mathcal{S}^\ast$, then we claim that adding $X_{i,T}$ to $\mathcal{S}^\ast$ improves the margin, contradicting optimality:  by adding $X_{i,T}$ we (1) gain $m+2$ reward elements, (2) gain 0 or more clause elements, and (3) incur the cost of $m-1$ penalty elements.
      In total, then, this addition results in a net benefit of $1$ or more.

      And if both $X_{i,T}$ and $X_{i,F}$ are present in $\mathcal{S}^\ast$, then removing $X_{i,T}$ from $\mathcal{S}^\ast$ improves the set's margin, again contradicting optimality:  we remove $m+1$ penalty elements and at most $m$ clause elements, and do not affect the reward elements (which remain in $\mathcal{S}^\ast$ because of $X_{i,F}$), at a net benefit of $1$ or more.
      
    \item \emph{For any set $\mathcal{S}$ of subsets that corresponds to a truth assignment $\rho$ for the variables in $\varphi$, the margin of $\mathcal{S}$ is exactly
        \[
          (\text{the number of clauses in $\varphi$ satisfied by $\rho$}) + n.
        \]
        Thus $\mathcal{S}^\ast$ (the optimal set of subsets) corresponds to the truth assignment that maximizes the number of clauses satisfied in $\varphi$.}

      Consider any set $\mathcal{S}$ of subsets that, for any $i$, contains exactly one of $X_{i,T}$ and $X_{i,F}$.  Then the margin of $\mathcal{S}$ is
      \begin{eqnarray*}
        \\
        && \tikz[baseline, node distance=.3cm]{
           \node[text width=3.5cm, align=center, anchor=north](u) {the number of clauses in $\varphi$ satisfied by $\rho$};
           \node[right=of u] (plus) {$\mathbin{+}$};
           \node[right=of plus] (v) {$(m+2)\cdot n$};
           \node[right=of v] (minus) {$\mathbin{-}$};
           \node[right=of minus] (w) {$(m+1)\cdot n$};
           \node[right=of v] (minus) {$\mathbin{-}$};
           \node[overlay, above=0pc of u, font=\scriptsize, align=center,blue!50!black] {correct clause elements};
           \node[overlay, above=0pc of v, font=\scriptsize, align=center, blue!50!black,text width=1.5cm] {correct reward elements};
           \node[overlay, above=0pc of w, font=\scriptsize, align=center, blue!50!black,text width=1.5cm] {incorrect penalty elements};
           }
        \\[\jot]
        = && (\text{the number of clauses in $\varphi$ satisfied by $\rho$}) + n.
      \end{eqnarray*}
      Thus $\mathcal{S}^\ast$ must be the set of subsets that maximizes this expression --- and, because the only term that varies with $\rho$ is the number of clauses that it satisfies --- $\mathcal{S}^\ast$ must correspond to a truth assignment maximizing the number of satisfied clauses.
    \end{enumerate}
    It follows, then, that the existence of a polynomial-time algorithm that solves $k$-occurrence TAP implies a polynomial-time algorithm for \textsf{Max $k$-SAT}.  Thus $k$-occurrence TAP is NP-hard for any $k \ge 2$.
    \qed
\end{proof}

\begin{figure}[tp]
    \def\mones{\substack{1 \\ 1 \\ \vdots \\ 1}}
    \def\mzeros{\substack{0 \\ 0 \\ \vdots \\ 0}}
    \def\Kones{\substack{1 \\ 1 \\ \vdots \\ 1}}
    \def\Kzeros{\substack{0 \\ 0 \\ \vdots \\ 0}}
    \centering
    \begin{tikzpicture}

     \matrix (C) [
     matrix of math nodes, left delimiter={[},right delimiter={]}, inner xsep=1, inner ysep=5, nodes={minimum width=4pc, minimum height=1.5pc}]
      {
        x_{1,1,t} & x_{1,1,f} & \dots  & x_{1,n,t} & x_{1,n,f} \\
        x_{2,1,t} & x_{2,1,f} & \dots  & x_{2,n,t} & x_{2,n,f} \\
        \vdots    &  \vdots   & \ddots & \vdots    & \vdots \\
        x_{m,1,t} & x_{m,1,f} & \dots  & x_{m,n,t} & x_{m,n,f} \\
        \mones    & \mzeros   & \dots  & \mzeros   & \mzeros \\
        \mzeros   & \mones    & \dots  & \mzeros   & \mzeros \\
        \vdots    & \vdots    & \ddots & \vdots    & \vdots \\
        \mzeros   & \mzeros   & \dots  & \mzeros   & \mones \\
        \Kones    & \Kones    & \dots  & \Kzeros   & \Kzeros \\
        \vdots    & \vdots    & \ddots & \vdots    & \vdots \\
        \Kzeros   & \Kzeros   & \dots  & \Kones    & \Kones \\
      };

      \foreach \start in {5, 6, 8} {
          \draw  ($(+1pc,0) + (C-\start-1.north west)$) edge[|-|, shorten >=0.5pc, shorten <=0.5pc]
          node[midway, left, font=\scriptsize,rotate=90, anchor=south]{$m+1$} 
          ($(+1pc,0) + (C-\start-1.south west)$);
        }
      \foreach \start in {9, 11} {
          \draw  ($(+1pc,0) + (C-\start-1.north west)$) edge[|-|, shorten >=0.5pc, shorten <=0.5pc]
          node[midway, left, font=\scriptsize,rotate=90, anchor=south]{$m+2$}
          ($(+1pc,0) + (C-\start-1.south west)$);
      }

      \begin{pgfonlayer}{background}
        \node[draw,fill=blue!15,inner ysep=-2, fit=(C-1-1) (C-4-5)] (C clause nodes) {};
        \node[draw,fill=red!15,inner ysep=-2, fit=(C-5-1) (C-8-5)] (C bad nodes) {};
        \node[draw,fill=blue!15,inner ysep=-2, fit=(C-9-1) (C-11-5)] (C good nodes) {};
      \end{pgfonlayer}

      \node[above=1pc of C-1-1,rotate=90,anchor=west] {subset $X_{1,T}$};
      \node[above=1pc of C-1-2,rotate=90,anchor=west] {subset $X_{1,F}$};
      \node[above=1pc of C-1-4,rotate=90,anchor=west] {subset $X_{n,T}$};
      \node[above=1pc of C-1-5,rotate=90,anchor=west] {subset $X_{n,F}$};
      
      \node[right=4pc of C clause nodes,rotate=90,anchor=south, text width=6pc] {clause elements\\(all blue)};
      \node[right=4pc of C good nodes,rotate=90,anchor=south, text width=6pc] {reward elements\\(all blue)};
      \node[right=4pc of C bad nodes,rotate=90,anchor=south, text width=6pc] {penalty elements\\(all red)};

    \end{tikzpicture}
    \caption{
      A visualization of the construction in the proof of \Cref{thm:2-occurrence-is-hard}.
      For the Boolean proposition $\varphi$ with variables $x_1, \ldots, x_n$ and clauses $c_1, \ldots, c_m$, write $x_{i,j,T} = 1$ if $x_i$ appears unnegated in $c_j$ (and $x_{i,j,T} = 0$ otherwise) and write $x_{i,j,F} = 1$ if $x_i$ appears negated in $c_j$ (and $x_{i,j,F} = 0$ otherwise).
      Clause elements and reward elements are blue; penalty elements are red.
      \label{fig:k-occurrence-hardness-from-k-sat}}
\end{figure}

\subsection{TAP with Restricted Weight and Restricted Occurrence}

\FourWeightTwoOccurrenceHard*
\begin{proof}
  This result follows directly from \problem{VC-3} construction in the proof of \Cref{note:w=2:k=3} and a general construction that we can apply to any one-red instance.

  Specifically, we can ``collate'' a one-red TAP instance $I$ into an equivalent one-red instance $I'$ in which every red element appears exactly once: for every red element $r$, simply collapse together all subsets that have $r$ as a member.
  $I$ and $I'$ are equivalent in the sense that an optimal (or, indeed, $\alpha$-approximate) solution to one can be efficiently converted into an optimal (or $\alpha$-approximate) solution to the other.
  (The equivalence follows because any set $\mathcal{S}'$ of subsets in $I$ can only be improved by augmenting it to include every unchosen subset that includes a red element that is included in $\mathcal{S}'$.
  Such an expanded set of subsets in $I$ directly corresponds to a set of subsets in $I'$ with precisely the same margin.)

  Applying the ``collation'' operation to the \problem{VC-3} construction in the proof of \Cref{note:w=2:k=3} yields an instance with one subset for each node $u \in V$ in the graph.
  The subset $S_u$ consists of one red element corresponding to $u$ and $\mathit{degree}(u)$ blue elements corresponding to the edges incident to $u$.
  Because $u$ has degree at most $3$, this subset has weight at most $4$.
  Each red (node) element occurs once; each blue (edge) element occurs twice, once per endpoint.
  Thus the collated instance is $2$-occurrence, $4$-weight.
  \qed
\end{proof}
\begin{note}
    A complementary construction ``shatters'' a one-red instance $I$ into an equivalent $2$-weight one-red instance $I''$ whose subsets are all $2$-weight, by splitting each subset $\set{r, b_1, \ldots, b_k}$ into $k$ subsets $\set{r, b_1}, \set{r,b_2}, \ldots, \set{r,b_k}$.
  The three instances $I$ and ``collated $I$'' and ``shattered $I$'' are all equivalent in the sense of \Cref{note:w=4:k=2}.
\end{note}

\section{Deferred Proofs: Greedy Approximation of $2$-Weight TAP}
\label{sec:proofs:greedy}

Consider a one-red instance of TAP with $n$ blue elements. Let $\OPT_{\text{SC}}$ denote the smallest number of red elements in a set of subsets covering all blue elements.
Let $m_i$ denote the number of blue elements covered in the $i$th iteration of \GREEDY.
(Here, \GREEDY means: ``until all blue elements are covered, repeatedly pick the red element that covers [i.e., co-occurs in subsets with] the largest number of uncovered blue elements.'')

This one-red TAP instance is a restatement of an implicit \problem{Set Cover} instance, and thus the following lemma holds (for the values of $n$, $m_i$, and $\OPT_{\text{SC}}$ as listed above):
\ParekhSlavik*
\noindent%
We claimed the following lemma in \Cref{sect:2-weight:approx}; here is the deferred proof.

\GreedyApproximationLemmaOne*
\begin{proof}
  When $\OPT_{\text{SC}} = n$, then $K = 0$ and the claim holds vacuously.
  Otherwise $\OPT_{\text{SC}} < n$. We claim that, for any $k \le K$, the number $\sum_{j=1}^k m_j$ of blue elements covered in the first $k$ iterations is at least $2k$.
  We proceed by induction on $k$.
  For $k = 1$,
  \begin{align*}
    m_1
    &\geq \lceil n / \OPT_{\text{SC}} \rceil
    \tag*{\Cref{parekh-slavik}}\\
    &\geq n / \OPT_{\text{SC}}\\
    &> 1.
      \tag*{$n/\OPT_{\text{SC}} > 1$ because (by assumption) $\OPT_{\text{SC}} < n$}
  \end{align*}
  Because $m_1 > 1$ is an integer, we have $m_1 \geq 2$, and thus the first iteration covers at least $2$ blue elements.
  For $k \ge 2$, either we have already covered all $n$ blue elements or we have not.
  If we have covered all $n$, then we are done because $n \geq 2k$:
  \[ 
    \textstyle
    2k 
    \ \le \ 2 K 
    \ = \ 2 \ceiling{\frac12(n - \OPT_{\text{SC}})}
    \ \le \ 2 \ceiling{\frac12(n - 1)}
    \ \le \ n.
  \]
  If there remain blue elements to cover, we consider the following two cases:
  \begin{description}
  \item[Case 1:] \emph{\GREEDY is ``ahead of schedule'', meaning $\smash{\sum_{j = 1}^{k-1} m_j} \ge 2k-1$.}
        Because there remain blue elements to cover, \GREEDY will cover at least one of them ($m_k \geq 1$). This gives us $\sum_{j = 1}^{k} m_j \ \geq \ (2k - 1) + 1 \ = \ 2k$ as desired.
  
  \item[Case 2:] \emph{\GREEDY is ``right on schedule'', meaning $\sum_{j = 1}^{k-1} m_j = 2k-2$.} In this case, \GREEDY must cover at least two elements on its current iteration.
    First, observe that
  \def\c#1{\mathrlap{#1}\hphantom{n - \textstyle 2 \left[\frac{1}{2}(n - \OPT_{\text{SC}}) + \tfrac12\right]}}
  \begin{align*}
    &n - 2k\\
    &\ge n - 2K \tag*{$k \le K$}\\
    &= n - \textstyle 2 \ceiling{\frac{1}{2}(n - \OPT_{\text{SC}})} \tag*{definition of $K$}\\
    &\geq n - (n - \OPT_{\text{SC}} + 1) \tag*{$2 \ceiling{\frac12 x} \leq x + 1$ for integral $x$} \\
    &\ge \OPT_{\text{SC}} - 1,
  \end{align*}
  so
  \begin{equation}
    \label{n-(2k-2)>=OPT+1}
    \tag{$\ast$}
    {n - (2k - 2)} \ =\  {(n - 2k) + 2} \ \ge\  {(\OPT_{\text{SC}} - 1) + 2} \ = \  {\OPT_{\text{SC}} + 1}.
  \end{equation}
  Therefore,
  \begin{align*}
      \label{eqn:greedy-inductive-proof}
      m_k
      & \ge \left \lceil \frac{n - \sum_{j = 1}^{k-1} m_j}{\OPT_{\text{SC}}} \right \rceil
        \tag*{\Cref{parekh-slavik}}\\
      & = \left \lceil \frac{n - (2k - 2)}{\OPT_{\text{SC}}} \right \rceil
      \tag*{definition of Case 2}
      \\
      &\ge \left \lceil \frac{\OPT_{\text{SC}} + 1}{\OPT_{\text{SC}}} \right \rceil
        \tag*{(\ref{n-(2k-2)>=OPT+1})}        \\
       & = 2.
  \end{align*}
All together, then, we have $\sum_{j = 1}^{k} m_j  \ge (2k-2) + 1 = 2k$, as desired.
\end{description}
The cases are exhaustive by the inductive hypothesis: \GREEDY must have covered at least $2k - 2$ elements in its previous $k-1$ moves.
\qed
\end{proof}

\GreedyOneHalfApproximationOneRed*
\begin{proof}
  Consider any one-red instance of TAP.
  First, recall that our objective function for TAP is the margin of the chosen set of subsets: that is, the number of covered blue elements minus the number of covered red elements.
  Further, recall that, by \Cref{cor:one-red-TAP-is-red-blue-set-cover}, we know that there exists an optimal solution to the TAP instance that covers all blue elements --- and thus covers the smallest number of red elements while doing so.
  In other words, using the implicit \problem{Set Cover} formulation, and writing $n$ as the number of blue elements, we have
  \begin{equation}
    \OPT_{\text{TAP}} = n - \OPT_{\text{SC}}.
    \label{eq:optTAP=n-optSC}
  \end{equation}
  Similarly, because the greedy algorithms for one-red TAP and \problem{Set Cover} make precisely the same choices, we have
  \begin{equation}
    \GREEDY_{\text{TAP}} = n - \GREEDY_{\text{SC}}.
    \label{eq:greedyTAP=n-greedySC}
  \end{equation}
  Now, by \Cref{lem:greedy-lemma-1}, we know that the first $K = \left \lceil \frac12 (n - \OPT_{SC}) \right \rceil$ iterations of \GREEDY cover at least $2K$ blue elements.
  Thus there remain at most $n - 2K$ blue elements to be covered by the remaining moves.
  In the worst case, this takes $n - 2K$ moves, and therefore the number of red elements covered by \GREEDY satisfies
  \begin{equation}
    \GREEDY_{\text{SC}} \le K + (n - 2K) = n - K.
    \label{eq:greedySC<=n-K}
  \end{equation}
  We then have the desired result by algebraic manipulation of the approximation ratio:
  \begin{align*}
      \frac{\GREEDY_{\text{TAP}}}{\OPT_{\text{TAP}}}
      & = \frac{n - \GREEDY_{\text{SC}}}{n - \OPT_{\text{SC}}}
        \tag*{(\ref{eq:optTAP=n-optSC}) and (\ref{eq:greedyTAP=n-greedySC})}\\
      & \ge \frac{n - (n - K)}{n - \OPT_{\text{SC}}}
        \tag*{(\ref{eq:greedySC<=n-K})}\\
      & = \frac{K}{n - \OPT_{\text{SC}}} \\
      & = \frac{\left \lceil \frac12 (n - \OPT_{SC}) \right \rceil}{n - \OPT_{\text{SC}}}
        \tag*{definition of $K$}\\
      & \ge \frac{\frac12(n - \OPT_{\text{SC}})}{n - \OPT_{\text{SC}}}        \\
       & = \tfrac12, 
  \end{align*}
  as desired.
\end{proof}

\begin{example}
  \label{ex:bad-example-for-greedy}
  Consider the following one-red instance of TAP, with six blue elements $\set{b_1, \ldots, b_6}$, six red elements $\set{r_1, \ldots, r_6}$, and the following six subsets:
  \begin{align*}
    A &= \set{b_1, b_5, r_1} \\
    B &= \set{b_2, b_6, r_2} \\
    C &= \set{b_3, b_5, r_3} \\
    D &= \set{b_4, b_6, r_4} \\
    E &= \set{b_5, b_6, r_5} \\
    F &= \set{b_6, r_6}
  \end{align*}
  \GREEDY repeatedly chooses the red element that covers the largest number of (uncovered) blue elements.
  Thus one possible first choice made by \GREEDY is subset $E$: note that each of $\set{A, B, C, D, E}$ covers exactly two uncovered blue elements, one more than $F$, and so \GREEDY could choose $E$ by tie-breaking.
  At this point, we have four uncovered blue elements $\set{b_1, \ldots, b_4}$, and the only way for \GREEDY to cover these four elements is by choosing subsets $\set{A, B, C, D}$ in some order.

  Thus \GREEDY would cover $\set{b_1, b_2, b_3, b_4, b_5, b_6, r_1, r_2, r_3, r_4, r_5}$, for a margin of $1$.
  But choosing subsets $\set{A, B, C, D}$ covers $\set{b_1, b_2, b_3, b_4, b_5, b_6, r_1, r_2, r_3, r_4}$, for a margin of $2$. Thus \GREEDY is in this case a factor of two from optimal.

  While the example above is not $2$-weight, it can be transformed into a $2$-weight instance by reversing the procedure described in \Cref{note:w=4:k=2}.
  This results in the following problem instance, to which the argument above also applies:
  \begin{align*}
    A_1 &= \set{b_1, r_1}  & A_2 &= \set{b_5, r_1}\\
    B_1 &= \set{b_2, r_2}  & B_2 &= \set{b_6, r_2}\\
    C_1 &= \set{b_3, r_3}  & C_2 &= \set{b_5, r_3}\\
    D_1 &= \set{b_4, r_4}  & D_2 &= \set{b_6, r_4}\\
    E_1 &= \set{b_5, r_5}  & E_2 &= \set{b_6, r_5}\\
    F &= \set{b_6, r_6}.
  \end{align*}
\end{example}

\section{Deferred Proofs: Inapproximability}
\label{sec:proofs:inapprox}

Although the theorem that follows is less powerful than \Cref{note:w=4:k=2}, it is a useful warmup for the inapproximability result that follows:
\begin{restatable}{theorem}{TwoOccurrenceFiveWeightHard}
   \label{thm:5weight-2occurrence-is-hard}
   $2$-occurrence, $5$-weight TAP is NP-hard.
 \end{restatable}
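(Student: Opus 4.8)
The plan is to reduce from \problem{3-OCC-Max-2-SAT} (NP-hard by~\cite{bermankarpinski1999inapprox}), adapting the construction of \Cref{thm:2-occurrence-is-hard} but trimming the penalty and reward features so that every exemplar has weight at most~$5$; this is precisely the $a=3$, $k=2$ case of the general construction behind \Cref{thm:inapprox-by-max-k-SAT}. Given a $3$-occurrence $2$-CNF formula $\varphi$ with variables $x_1,\dots,x_n$ and clauses $c_1,\dots,c_m$, I build two exemplars $X_{i,T}$ and $X_{i,F}$ per variable, together with three kinds of features: (i) one blue \emph{clause feature} per clause, placed in the (at most two) exemplars whose literal satisfies that clause; (ii) $\floor{a/2}=1$ blue \emph{reward feature} per variable, shared by $X_{i,T}$ and $X_{i,F}$; and (iii) $\floor{a/2}=1$ red \emph{penalty feature} per exemplar, appearing nowhere else. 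Each clause feature occurs in $\le 2$ exemplars (one per literal of its length-$2$ clause), each reward feature in exactly $2$, and each penalty feature in exactly $1$, so the instance is $2$-occurrence. Writing $p_i$ and $q_i$ for the numbers of positive and negative occurrences of $x_i$ (so $p_i+q_i\le 3$), exemplar $X_{i,T}$ has weight $p_i + 1 + 1 \le 5$, and likewise for $X_{i,F}$; hence the instance is $5$-weight, and it is clearly constructible in polynomial time.

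Next I would show that some optimal TAP solution is a \emph{truth assignment} --- i.e., contains exactly one of $X_{i,T}, X_{i,F}$ for every~$i$ --- and is efficiently recoverable. Starting from any optimal $S^\ast$, for each variable whose pair is entirely contained in $S^\ast$ I remove the exemplar carrying fewer clause features; since the pair shares its reward feature, this removal uncovers no reward feature, and the removed exemplar carries $\min(p_i,q_i)\le\floor{3/2}=1$ clause features, exactly balanced by the single penalty feature we stop paying, so the margin does not decrease. Then, for each variable whose pair is entirely absent from $S^\ast$, I add $X_{i,T}$: this covers the reward feature (gain~$1$), its penalty feature (cost~$1$), and zero or more clause features, again not decreasing the margin. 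Each step preserves optimality, and the result contains exactly one exemplar per variable.

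Finally, I would compute the margin of any truth-assignment solution $S$ corresponding to an assignment $\rho$: each variable contributes its single reward feature and its single chosen penalty feature, which cancel (as $\floor{a/2}$ reward features equal $\floor{a/2}$ penalty features), while a clause feature is covered exactly when $\rho$ satisfies that clause. Hence the margin of $S$ equals the number of clauses of $\varphi$ satisfied by $\rho$. Combined with the previous paragraph, this gives $\OPT_{\text{TAP}}$ equal to the maximum number of simultaneously satisfiable clauses of $\varphi$, with an optimal assignment readable directly off an optimal truth-assignment solution, establishing NP-hardness of $2$-occurrence, $5$-weight TAP.

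The main obstacle is the second step: forcing ``exactly one'' to be optimal using only a single penalty feature per exemplar. The observation that makes it go through is that when both exemplars of a pair are present we may always delete the one with $\min(p_i,q_i)\le\floor{a/2}$ clause features, so the $\floor{a/2}$ penalty features we stop paying exactly cover the clause features we lose. This is where the occurrence bound $p_i+q_i\le a=3$ from \problem{3-OCC-Max-2-SAT} is essential, and where the general weight formula $a + 2\floor{a/2}$ originates.
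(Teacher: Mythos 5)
Your proposal is correct and follows essentially the same route as the paper's proof: the same reduction from \problem{3-OCC-Max-2-SAT} with one shared reward feature per variable and one private penalty feature per exemplar, the same local exchange argument (add an exemplar when a pair is absent since reward and penalty cancel; delete the exemplar with $\min(p_i,q_i)\le 1$ clause features when both are present, paid for by its penalty feature) to normalize an optimal solution into a truth assignment, and the same margin accounting showing the optimum equals the maximum number of satisfiable clauses. No gaps.
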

\begin{proof}
  Recall that that \problem{3-OCC-Max-2-SAT} is the \problem{Max-2-SAT} problem where variables are further restricted to appear (in positive or negated form) at most three times. This problem is known to be hard~\cite{bermankarpinski1999inapprox}.
  We reduce from \problem{3-OCC-Max-2-SAT} to $2$-occurrence, $5$-weight TAP, using a variation on the construction in \Cref{thm:2-occurrence-is-hard}: namely, we modify the number of penalty and reward elements to have just $1$ reward element per variable and $1$ penalty element per subset.
  The construction is otherwise the same, but the argument is slightly changed: instead of ensuring that the optimal TAP solution \emph{must have} selected exactly one of $\set{X_{i,T}, X_{i,F}}$ for each $i$, we argue that any optimal solution \emph{can be efficiently modified} to create a truth assignment.
  
  To see this claim, take an arbitrary optimal solution $\mathcal{S}^\ast$.
  First, if there is a variable $x_{i}$ such that $X_{i,T} \notin \mathcal{S}^\ast$ and $X_{i,F} \notin \mathcal{S}^\ast$, then adding either of these variables to $\mathcal{S}^\ast$ will not be a loss: the reward and penalty cancel each other out, and any clause elements will only increase the margin.
  Conversely, if there is a variable $x_{i}$ such that both $X_{i,T} \in \mathcal{S}^\ast$ and $X_{i,F} \in \mathcal{S}^\ast$, then removing one will not result in a loss.
  The fact that $x_{i}$ occurs in no more than 3 clauses implies $X_{i,T}$ and $X_{i,F}$ contain no more than 3 clause elements, combined.
  Then, one of $X_{i,T}$ and $X_{i,F}$ contains no more than one clause element, in which case removing it will not decrease the margin (avoiding one penalty element [i.e., a gain of $1$ in the margin], having no effect on the reward elements, and losing at most one clause element).

  The weight of each subset in this TAP instance is at most $5$ (3 clause elements, 1 penalty element, and 1 reward element).  
  Each element appears only once or twice.
  Thus the resulting TAP instance is $5$-weight and $2$-occurrence.
  \qed
\end{proof}

\inapproxViaAOCCMaxKSat*
\begin{proof}
  Consider \problem{$a$-OCC-MAX-$k$-SAT}, a generalization of \problem{$3$-OCC-MAX-$2$-SAT} as deployed in the proof of \Cref{thm:5weight-2occurrence-is-hard}.
  The critical elements of the construction in \Cref{thm:5weight-2occurrence-is-hard} are:
  \begin{itemize}
  \item \emph{For each variable, the number of reward elements equals the number of penalty elements.}
    Doing so ensures that, if we omit both $X_{i,T}$ and $X_{i,F}$ in any TAP solution $\mathcal{S}'$, we can immediately construct another TAP solution $\mathcal{S}''$ that is no worse than $\mathcal{S}'$ and also includes at least one of these two subsets.
    To see this, observe that if both $X_{i,T}$ and $X_{i,F}$ are omitted from $\mathcal{S}'$, then adding, say, $X_{i,T}$ has the following effect on the set's margin: a nonnegative impact on the clause elements ($X_{i,T}$ may include a previously uncovered clause element, or not), a positive impact via reward elements ($\mathop{+}$ the number of reward elements), and a negative impact via penalty elements ($\mathop{-}$ the number of penalty elements).
    If the number of reward elements equals the number of penalty elements, then the later two effects cancel each other out, and $\mathcal{S}' \cup \set{X_{i,T}}$ has no worse a margin than $\mathcal{S}'$.

  \item \emph{The number of penalty elements for a variable $x_i$ is at least $\floor{\frac{a}{2}}$.}
    Having this $\floor{\frac{a}{2}}$ lower bound on the number of penalty elements ensures that we can omit at least one of $X_{i,T}$ and $X_{i,F}$ in an optimal TAP solution, as follows. Because \problem{$a$-OCC-Max-2-SAT} constrains each variable to appear only $a$ times, the number of clause elements for $X_{i,T}$ and $X_{i,F}$, in total, is $a$; thus at least one of $X_{i,T}$ and $X_{i,F}$ has at most $\floor{\frac{a}{2}}$ clause elements. Therefore removing from a set of subsets whichever of $\set{X_{i,T}, X_{i,F}}$ has fewer clause elements (while leaving the other in the set) has the effect of losing at most $\floor{\frac{a}{2}}$ clause elements while also losing at least $\floor{\frac{a}{2}}$ penalty elements (and having no effect on the number of reward elements), for a net change in the margin that is nonnegative.
  \end{itemize}
  Thus we modify the construction in \Cref{thm:5weight-2occurrence-is-hard} to transform a \problem{$a$-OCC-MAX-$k$-SAT} instance into a TAP instance with exactly $\floor{\frac{a}{2}}$ penalty and reward elements per variables. The weight of each subset in the resulting TAP instance is at most $a + 2\cdot\floor{\frac{a}{2}}$, consisting of $a$ clause elements, $\floor{\frac{a}{2}}$ penalty elements, and $\floor{\frac{a}{2}}$ reward elements.  Each clause element appears at most $k$ times; each penalty and reward element appears only once or twice. Thus the resulting instance of TAP obeys the $(a + 2\cdot\lfloor \frac{a}{2} \rfloor)$-weight and $k$-occurrence constraints.

    As before, an $\alpha$-approximation for the constructed TAP instance would imply an $\alpha$-approximation for \problem{$a$-OCC-Max-$k$-SAT}:  the optimal margin in the TAP instance is precisely equal to the number of satisfied clauses in the \problem{$a$-OCC-Max-$k$-SAT} instance, as the reward and penalty values cancel out.
    Therefore, an $\alpha$-approximation for the TAP instance would imply an $\alpha$-approximation for \problem{$a$-OCC-Max-$k$-SAT}.
    \qed
\end{proof}

\inapproxViaAOCCMaxKSatinstantiated*
\begin{proof}
  Known inapproximability results derived by Berman and Karpinski~\cite{bermankarpinski1999inapprox} for \problem{3-Occ-Max-2-SAT} and \problem{6-Occ-Max-2-SAT} imply that TAP is hard to approximate, as claimed:

  \problem{$3$-OCC-MAX-$2$-SAT} is hard to approximate within $\frac{2011}{2012}$~\cite{bermankarpinski1999inapprox}, and $5=(3 +  2\cdot \lfloor \frac{3}{2} \rfloor)$.

  \problem{$6$-OCC-MAX-$2$-SAT} is hard to approximate within $\frac{667}{668}$~\cite{bermankarpinski1999inapprox}, and $12=(6 +  2\cdot \lfloor \frac{6}{2} \rfloor)$.
  \qed
 \end{proof}

\inapproxViakDMk*
\begin{proof}
  From an arbitrary \problem{MAX-$k$DM-$k$} instance, we will construct a corresponding $k$-occurrence, $k$-weight TAP instance.
  We define a canonical type of solution for the resulting TAP instance, and show two facts:
  (i) an arbitrary TAP solution $\mathcal{S}'$ for this instance can be efficiently converted into a canonical TAP solution $\mathcal{S}''$, where $\mathcal{S}''$ has equal or better margin to $\mathcal{S}'$; and
  (ii) there is an efficient mapping between canonical TAP solutions and \problem{MAX-$k$DM-$k$} solutions that preserves the objective function values across the problems.
  Consequently, a TAP solution with margin $\Delta$ can be efficiently converted into an canonical TAP solution with margin at least $\Delta$, which can in turn be efficiently translated into \problem{MAX-$k$DM-$k$} solution containing at least $\Delta$ disjoint $k$-tuples.
  Because (i) and (ii) also imply that $\OPT_{\problem{TAP}} = \OPT_{\problem{DM}}$, if we can efficiently compute a solution to the TAP instance with margin $\Delta \ge \alpha \cdot \OPT_{\problem{TAP}}$ (i.e., an $\alpha$-approximation for $k$-occurrence, $k$-weight TAP), then we can efficiently construct a \problem{MAX-$k$DM-$k$} solution containing $\alpha \cdot \OPT_{\problem{DM}}$ sets (i.e., an $\alpha$-approximation for \problem{MAX-$k$DM-$k$}).
  
  First, we construct a $k$-occurrence, $k$-weight TAP instance from an arbitrary \problem{MAX-$k$DM-$k$} instance.
  Consider an instance of \problem{MAX-$k$DM-$k$} with given sets $S_1, S_2, \ldots, S_k$, and a collection $\mathcal{C} = \set{\smash{m_1, \ldots, m_{|\mathcal{C}|}}}$ of $k$-tuples, where each $k$-tuple $m_j$ is an element of $S_1 \times S_2 \times \cdots \times S_k$.
  From this, we construct a TAP instance consisting of a collection of $k|\mathcal{C}|$ subsets with $\sum_i|S_i| + (k-1)|\mathcal{C}|$ total elements, as follows:
  \begin{itemize}
  \item Define one blue element $b_y$ corresponding to each element $y \in \bigcup_i S_i$.

    Define $k-1$ red elements $r^j_1, \ldots, r^j_{k-1}$ corresponding to each $k$-tuple $m_j \in \mathcal{C}$. 
    
    Thus there are $\sum_i|S_i|$ blue elements and $(k-1)|\mathcal{C}|$ red elements.

  \item 
    For each $k$-tuple $m_j \in \mathcal{C}$, we define $k$ corresponding subsets $$S_j := \set{X_{1,m_j}, \ldots, X_{k,m_j}}.$$ Each subset in $X_{i, m_j} \in S_j$ has precisely one blue element and $k-1$ red elements:
    \begin{itemize}
    \item The subset $X_{i,m_j}$ contains the blue element $b_y$ where $y = (m_j)_i$ --- that is, the blue element corresponding to the $i$th component of $m_j$.
    \item The subset $X_{i,m_j}$ contains the red elements $r^j_1, \ldots, r^j_{k-1}$ corresponding to $m_j$.
    \end{itemize}
    Thus the subsets in $S_j$ all contain the same $k-1$ red elements, but each contains a distinct blue element.
    (See \Cref{example:max-k-DM-k:reduction} for an illustration.)
  \end{itemize}%
%
  It is straightforward to see that the weight of each subset is $k$.
  For occurrence, each red element is found in exactly $k$ subsets, and each blue element can occur in no more than $k$ subsets because of the $k$-occurrence constraint on the given \problem{MAX-$k$DM-$k$} instance.
  Thus the resulting TAP instance is $k$-weight and $k$-occurrence.

  Call \emph{canonical} any solution to this TAP instance that does not contain any two subsets that share the same blue element and further, for every $j$, contains either \emph{all $k$ of the subsets in $S_j$} or it contains \emph{none of the $k$ subsets.} Now we must argue for (i) and (ii).

  For (i), let $\mathcal{S}'$ be any solution to the constructed TAP instance.
  Fix $j$.
  Notice that the set $S_j$ of subsets associated with $m_j$ has $k$ distinct blue elements (a different blue element in each subset) with $k-1$ total red elements (all of which appear in all $k$ subsets).
  Suppose that some but not all of $S_j$ appears in $\mathcal{S}'$ --- i.e., suppose $\mathcal{S}' \cap S_j \neq \emptyset$ but $E \in S_j - \mathcal{S}'$.
  Then $\mathcal{S}'' = \mathcal{S}' \cup \set{E}$ has no worse of a margin than $\mathcal{S}'$: adding $E$ to $\mathcal{S}'$ does not add any red elements (they were already covered by the subsets in $\mathcal{S}' \cap S_j$), and it might add one more blue element to $\mathcal{S}'$ (if it is not already covered by other subsets in $\mathcal{S}'$).

  Applying the above transformation yields a solution $\mathcal{S}'$ to the constructed TAP instance consisting of the union of $S_j$s (i.e., each $S_j \cap \mathcal{S}' = \emptyset$ or $S_j \cap \mathcal{S}' = S_j$).
  Now suppose that $S_j \subseteq \mathcal{S}'$ and $S_{j'} \subseteq \mathcal{S}'$ where there is some index where $(m_j)_i = (m_{j'})_i$ --- that is, where $m_j$ and $m_{j'}$ are not disjoint.
  Then we claim that $\mathcal{S}'' = \mathcal{S}' - S_j$ has no worse of a margin than $\mathcal{S}'$: excising $S_j$ removes $k-1$ red elements from $\mathcal{S}'$ (the $k-1$ red elements shared across the subsets in $S_j$ that appear nowhere else) and removes at most $k-1$ blue elements from $\mathcal{S}'$ (of the $k$ blue elements in $S_j$, at least one remains covered by $S_{j'}$).

  In other words, the above transformation yields a canonical solution $\mathcal{S}''$ to the constructed TAP instance whose margin is no worse than that of $\mathcal{S}'$.
  
  Now, for (ii), observe that any canonical solution to the constructed TAP instance can be written as $\mathcal{S}' = \set{S_j : j \in I}$ for a set of indices $I$ where no single blue element appears in more than one subset in $\mathcal{S}'$.
  The margin of $\mathcal{S}'$ is precisely $|I|$.
  From the perspective of \problem{MAX-$k$DM-$k$}, the set $\set{m_j : j \in I}$ contains $|I|$ element-disjoint $k$-tuples from the given collection $\mathcal{C}$, or, in other words, a \problem{MAX-$k$DM-$k$} solution with objective function value $|I|$. 
  Because the transformations were efficient, the theorem follows.
  \qed
\end{proof}

\inapproxViaKDMKinstantiated*
\begin{proof}
  For the unconstrained version of TAP: a hardness result due to Hazan, Safra, and Schwartz establishes that MAX-$k$DM (with no constraint on the number of occurrences) is hard to $O(k / \ln k)$-approximate~\cite{hazan2003:kDM}.
  Known inapproximability results due to Chleb{\'\i}k and Chleb{\'\i}kov{\'a}~\cite{chlebik03:inapprox} for \problem{MAX-$k$DM-$2$}, and thus for \problem{MAX-$k$DM-$k$}, imply the latter results:
  \problem{MAX-3DM-2} is hard to $\frac{94}{95}$-approximate and
  \problem{MAX-4DM-2} is hard to $\frac{47}{48}$-approximate.
  \qed
\end{proof}

\begin{example}
  \label{example:max-k-DM-k:reduction}
  Consider the 3-dimensional matching instance with the following 3-tuples:
  \begin{align*}
    A &= \tup{1, 5, 9}    & B &= \tup{2, 5, 10} & C &= \tup{2, 7, 11}\\
    D &= \tup{3, 6, 10}   & E &= \tup{3, 8, 12} & F &= \tup{4, 7, 9}\\ 
    && G &= \tup{4, 6, 11}
  \end{align*}
  (Note that the maximum occurrence of any element happens to be $2$.)

  Then, in our construction, we create 12 blue elements $\set{b_1, b_2, \ldots, b_{12}}$, one per element, and we create 14 red elements $\set{r_{A1},r_{A2}, r_{B1},r_{B2}, \ldots, r_{G1},r_{G2}}$, two per $3$-tuple.  We then define 21 subsets, three corresponding to each of $\set{A, B, \ldots, G}$:
  \begin{align*}
    X_{1, A} &= \set{b_1, r_{A1}, r_{A2}} & X_{1, B} &= \set{b_2, r_{B1}, r_{B2}}    & X_{1, C} &= \set{b_2, r_{C1}, r_{C2}}   \\   
    X_{2, A} &= \set{r_{A1}, b_5, r_{A2}} & X_{2, B} &= \set{r_{B1}, b_5, r_{B2}}    & X_{2, C} &= \set{r_{C1}, b_7, r_{C2}}   \\   
    X_{3, A} &= \set{r_{A1}, r_{A2}, b_9} & X_{3, B} &= \set{r_{B1}, r_{B2}, b_{10}} & X_{3, C} &= \set{r_{C1}, r_{C2}, b_{11}} \\
    \\
    X_{1, D} &= \set{b_3, r_{D1}, r_{D2}} & X_{1, E} &= \set{b_3, r_{E1}, r_{E2}}    & X_{1, F} &= \set{b_4, r_{F1}, r_{F2}}   \\   
    X_{2, D} &= \set{r_{D1}, b_6, r_{D2}} & X_{2, E} &= \set{r_{B1}, b_8, r_{B2}}    & X_{2, F} &= \set{r_{F1}, b_7, r_{F2}}   \\   
    X_{3, D} &= \set{r_{D1}, r_{D2}, b_{10}}& X_{3, E} &= \set{r_{E1}, r_{E2}, b_{12}} & X_{3, F} &= \set{r_{F1}, r_{F2}, b_{9}} \\
    \\
    && X_{1, G} &= \set{b_4, r_{G1}, r_{G2}} \\ 
    && X_{2, G} &= \set{r_{G1}, b_6, r_{G2}} \\
    && X_{3, G} &= \set{r_{G1}, r_{G2}, b_{11}}
  \end{align*}
  Each subset has weight three; each element occurs at most three times.
  (Red elements occur exactly three times; each blue element $b_i$ occurs exactly the same number of times that $i$ appears in the 3-dimensional matching instance.)
\end{example}


\end{document}
